\newbox\ProofSym
\newcommand{\ccw}{\textsf{ccw}}
\newcommand{\cw}{\textsf{cw}}
\spnewtheorem{observation}{Observation}{\bfseries}{\rmfamily}
\newcommand{\arc}[1]{\gamma(#1)}
\newif\iffull
\begin{document}
%
\title{Covering Convex Polygons by Two Congruent Disks\thanks{This research was supported by the Institute of Information \& communications 
Technology Planning \& Evaluation(IITP) grant funded by the Korea government(MSIT)
(No. 2017-0-00905, Software Star Lab (Optimal Data Structure and Algorithmic Applications in Dynamic Geometric Environment)) and (No. 2019-0-01906, Artificial Intelligence Graduate School Program(POSTECH)).}}

%
%
\author{Jongmin Choi\inst{1} \and
Dahye Jeong\inst{1} \and
Hee-Kap Ahn\inst{2}\orcidID{0000-0001-7177-1679}}
\authorrunning{J. Choi et al.}
%

\institute{Department of Computer Science and Engineering, 
Pohang University of Science and Technology, Pohang, Korea\\
\email{\{icothos,dahyejeong\}@postech.ac.kr} \and
Department of Computer Science and Engineering, Graduate School of Artificial Intelligence, 
Pohang University of Science and Technology, Pohang, Korea\\
\email{heekap@postech.ac.kr}}
\maketitle              
\begin{abstract}
We consider the planar two-center problem for a convex polygon: given 
a convex polygon in the plane, find two congruent disks of minimum radius whose union contains 
the polygon. We present an $O(n\log n)$-time
algorithm for the two-center problem for a convex polygon, where $n$ is the number of
vertices of the polygon. This improves upon the previous best algorithm for the problem.

\keywords{Two-center problem  \and Covering \and Convex polygon.}
\end{abstract}

\section{Introduction}
The problem of covering a region $R$ by a predefined shape $Q$ (such as a
disk, a square, a rectangle, a convex polygon, etc.) in the plane is
to find $k$ homothets\footnote{For a shape $Q$ in the plane, a (positive) homothet of $Q$ 
is a set of the form $\lambda Q +v := \{\lambda q+v\mid q\in Q\}$,
where $\lambda>0$ is the homothety ratio, and $v\in\mathbb{R}^2$ is a translation vector.}  of $Q$ 
with the same homothety ratio 
such that their union contains $R$ and the homothety ratio is minimized.
The homothets in the covering are allowed to overlap, as long
as their union contains the region. 
This is a fundamental optimization problem~\cite{Agarwal2002,Agarwal1998,Feder1988} arising in analyzing 
and recognizing shapes, 
and it has real-world applications, including computer vision and data mining.

The covering problem has been extensively studied in the context of 
the \emph{$k$-center problem} and the \emph{facility location} problem
when the region to cover is a set of points and the predefined shape is
a disk in the plane.
In last decades, there have been a lot of works, including exact algorithms for $k=2$~\cite{AGAR1994,CHAN1999,CHO2020,CHOI2020,SHARIR1997,WANG2020}, exact and approximation algorithms 
for large $k$~\cite{Agarwal2002,Feder1988,GONZALEZ1985,HWANG1993}, 
algorithms in higher dimensional spaces~\cite{Agarwal2013,Agarwal2002,MEGI1984}, and 
approximation algorithms for streaming points~\cite{Agarwal2015,Ahn2014107,Chan2014,Hershberger2008,KIM2015,Zadeh2008}.
There are also some works on the $k$-center problem for small $k$ when the region
to cover is a set of disks in the plane, for $k=1$~\cite{Fischer2004,Loeffler2010,Megiddo1989} and $k=2$~\cite{AHN2013}.

In the context of the facility location, there have also been some works on 
the \emph{geodesic $k$-center} problem 
for simple polygons~\cite{Ahn2016,OH2018} 
and polygonal domains~\cite{Bae2019}, in which we find 
$k$ points (centers) in order to minimize the maximum geodesic 
distance from any point in the domain to its closest center.



In this paper we consider the covering problem for a convex polygon in which 
we find two congruent disks of minimum radius whose union contains the convex polygon.
Thus, our problem can be considered as the \emph{(geodesic) two-center problem for a convex polygon.} See Fig.~\ref{fig:twoCenter} for an illustration.

\paragraph{Previous works.} For a convex polygon with $n$ vertices, Shin~et~al.~\cite{SHIN1998} gave an 
$O(n^2 \log^3 n)$-time algorithm using parametric search for the two-center
problem. They also gave an $O(n\log^3{n})$-time algorithm
for the restricted case of the two-center problem in which 
the centers must lie at polygon vertices.
Later, Kim and Shin~\cite{KIM2000} improved the results and gave an $O(n \log^3 n \log\log n)$-time algorithm for the two-center problem 
and an $O(n \log^2 n)$-time algorithm for the restricted case of the problem.

There has been a series of work dedicated to variations of the $k$-center problem for a convex polygon, most of which require certain constraints on the centers, including the centers restricted to lie on the polygon boundary~\cite{ROY2008} and
on a given polygon edge(s)~\cite{DAS2008,ROY2008}.
For large $k$, there are quite a few approximation algorithms.
For $k \geq 3$, Das~et~al.~\cite{DAS2008} gave an $(1+\epsilon)$-approximation algorithm with the centers restricted to lie 
on the same polygon edge, along with a heuristic algorithm without such restriction.
Basappa~et~al.~\cite{BASAPPA2020} gave a $(2+\epsilon)$-approximation algorithm for $k \geq 7$, where the centers are restricted to lie on the polygon boundary.
There is a 2-approximation algorithm for the two-center problem for 
a convex polygon that supports insertions and deletions
of points in $O(\log n)$ time per operation~\cite{SADHU2019}.

\begin{figure}%
  \centering
  {\includegraphics[width=.8\textwidth]{./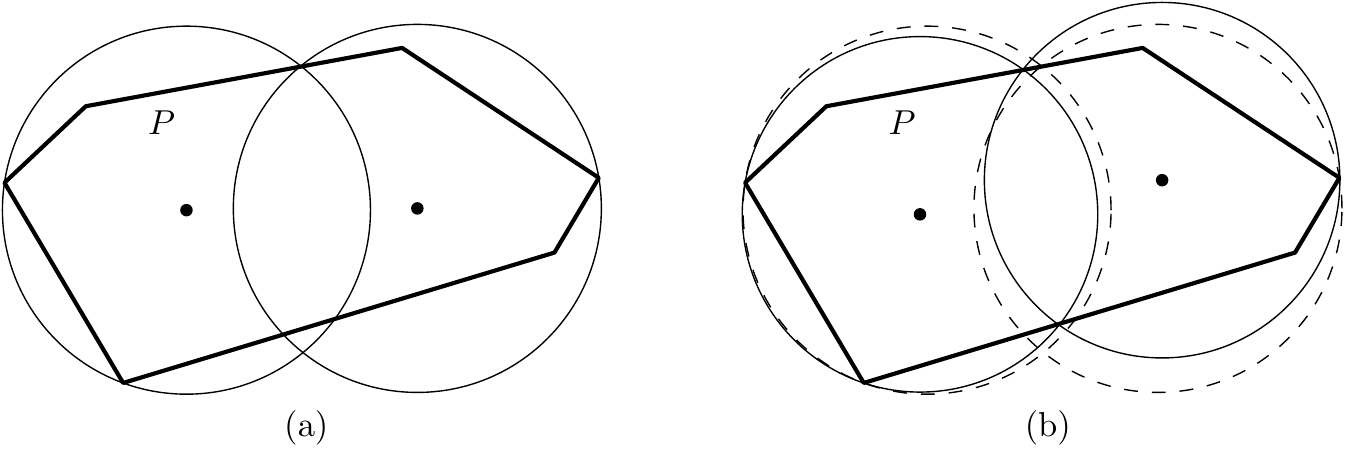}}%
  \caption{(a) Two congruent disks whose union covers a convex polygon $P$. (b) $P$ can be covered by two congruent disks of smaller radius.}%
  \label{fig:twoCenter}%
\end{figure}
   
\paragraph{Our results.}
We present an $O(n\log n)$-time deterministic algorithm for the
two-center problem for a convex polygon $P$ with $n$ vertices. 
That is, given a convex polygon with $n$ vertices, we can find in $O(n\log n)$ time two congruent disks of minimum radius whose union covers the polygon.
This improves upon the $O(n\log^{3}{n}\log{ \log {n}})$
time bound of Kim and Shin~\cite{KIM2000}.
\paragraph{Sketch of our algorithm.}
Our algorithm is twofold.  First we solve the sequential decision
problem in $O(n)$ time. 
That is, given a real value $r$, decide whether $r \geq r^{*}$, where $r^{*}$ is
the optimal radius value.
Then we present a parallel
algorithm for the decision problem which takes $O(\log{n})$
time using $O(n)$ processors, after an $O(n\log{n})$-time preprocessing. 
Using these decision algorithms and applying Cole's
parametric search~\cite{COLE1987}, we solve the optimization problem,
the two centers for $P$, in $O(n\log{n})$ deterministic time.

We observe that if $P$ is covered by two congruent disks 
$D_1$ and $D_2$ of radius $r$, 
$D_1$ covers a connected subchain $P_1$ of the boundary of 
$P$ and $D_2$ covers
the remaining subchain $P_2$ of the boundary of $P$. 
Thus, in the sequential decision algorithm, 
we compute for any point $x$ on the boundary of $P$, the longest subchain of the boundary of $P$ from $x$ in counterclockwise direction that is covered by a disk of radius $r$, 
and the longest subchain of the boundary $P$ from $x$ in clockwise direction that is covered by a disk of radius $r$.
We show that the determinators of the disks that define the two longest subchains 
change $O(n)$ times while $x$ moves along the boundary of $P$.
We also show that the disks and the longest subchains can be represented by $O(n)$ 
algebraic functions.
Our sequential decision algorithm computes the longest subchains in $O(n)$ time.
Finally, the sequential decision algorithm determines whether there is a point $x'$
in $P$ such that the two longest subchains from $x'$, one in counterclockwise direction and one in clockwise direction, cover the polygon boundary in $O(n)$ time.


Our parallel decision algorithm computes the longest subchains 
in parallel and determines whether there is a point $x'$ in $P$ such that 
the two longest subchains from $x'$ covers the polygon boundary in $O(\log n)$ parallel steps using $O(n)$ processors after $O(n\log n)$-time preprocessing.
For this purpose, the algorithm finds rough bounds of the longest subchains,
by modifying the parallel decision algorithm for the planar 
two-center problem of points in convex position~\cite{CHOI2020} 
and applying it for the vertices of $P$.
Then the algorithm computes $O(n)$ algebraic functions of the longest subchains in $O(\log n)$ time using $O(n)$ processors.
Finally, it determines in parallel computation 
whether there is a point $x'$ in $P$ such that 
the two longest subchains from $x$ covers the polygon boundary.


  We can compute the optimal radius value $r^*$ using Cole's parametric search~\cite{COLE1987}.
  For a sequential decision algorithm of running time $T_S$ and 
  a parallel decision algorithm of parallel running time $T_P$ using $N$ processors, 
  Cole's parametric search is a technique that computes an optimal value in $O(NT_P + T_S(T_P + \log N))$ time.
  In our case, $T_S = O(n)$, $T_P = O(\log n)$, and $N = O(n)$.
  Therefore, we get a deterministic $O(n\log n)$-time algorithm for the two-center problem for a convex polygon $P$.

\iffull
\else
Due to lack of space, some of the proofs and details are omitted. 
A full version of this paper is available in Appendix. 
\fi

\section{Preliminaries}

  For any two sets $X$ and $Y$ in the plane, we say $X$ \emph{covers} $Y$
  if $Y\subseteq X$. We say a set $X$ is \emph{$r$-coverable} 
  if there is a disk $D$ of radius $r$ covering $X$. For a compact set $A$, 
  we use $\partial A$ to denote the boundary of $A$. We simply say $x$ moves 
  \emph{along} $\partial A$ when $x$ moves in the counterclockwise direction 
  along $\partial A$. Otherwise, we explicitly mention the direction.
 
  Let $P$ be a convex polygon with $n$ vertices $v_1,v_2,\ldots,v_n$ in counterclockwise order 
  along the boundary of $P$.
  Throughout the paper, we assume
  general circular position on the vertices of $P$, meaning no four vertices are cocircular.  
  We denote the subchain of $\partial P$ from a
  point $x$ to a point $y$ in $\partial P$ in counterclockwise
  order as $P_{x,y} = \left\langle x, v_{i}, v_{i+1},\ldots, v_{j}, y  \right\rangle$, where
  $v_{i}, v_{i+1},\ldots, v_{j}$ are the vertices of $P$
  that are contained in the subchain. 
  We call $x, v_{i}, v_{i+1},\ldots, v_{j}, y$ the \emph{vertices} of $P_{x,y}$.
  By $|P_{x,y}|$, we denote the number of distinct vertices of $P_{x,y}$.

  We can define an order on the points of
  $\partial P$, with respect to a point $p \in \partial P$.
  For two points $x$ and $y$ of $\partial P$, we use 
  $x<_{p}y$ if $y$ is farther from $p$ than $x$ in the counterclockwise direction
  along $\partial P$. We define $\leq_{p}, >_{p}, \geq_{p}$ accordingly.

  For a subchain $C$ of $\partial P$, we denote
  by $I_{r}(C)$ the intersection of the disks of radius $r$, 
  each centered at a point in $C$. See Fig.~\ref{fig:intersection}(a). 
  Observe that any disk of radius $r$
  centered at a point $p\in I_{r}(C)$ covers the entire
  chain $C$. Hence, $I_{r}(C) \neq \emptyset$ if and only if $C$ is $r$-coverable.  
  The \emph{circular hull} of a set $X$, denoted by $\alpha_{r}(X)$, 
  is the intersection of all disks of radius $r$ covering $X$.
  See Fig.~\ref{fig:intersection}(b).
  Let $S$ be the set of vertices of a subchain $C$ of $\partial P$.  If a disk
  covers $C$, it also covers $S$.  
  If a disk covers $S$, it covers $C$ since it covers
  every line segment induced by pairs of the points in $S$, due to the convexity
  of a disk.  Therefore, $\alpha_{r}(C)$ and
  $\alpha_{r}(S)$ are the same and $I_{r}(C)$ and $I_r(S)$ are the same. 
\begin{figure}%
  \centering
  {\includegraphics[width=.6\textwidth]{./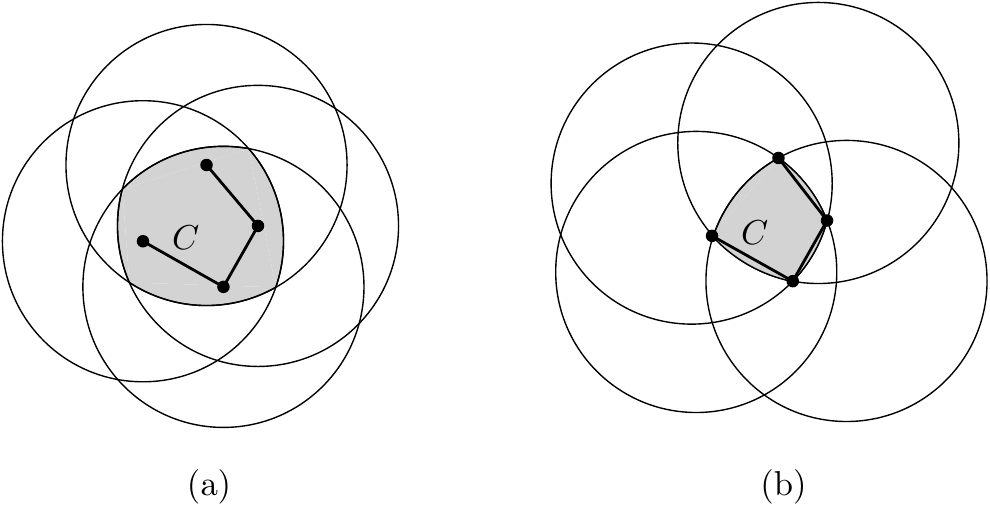}}%
  \caption{$C$ is a subchain of $\partial P$ and $S$ is the vertex set of $C$. (a) $I_r(S)=I_r(C)$ (b) $\alpha_{r}(S)=\alpha_r(C)$}%
  \label{fig:intersection}%
\end{figure}

  Every vertex of $\alpha_{r}(C)$ is a vertex of $C$.  
  The boundary of $\alpha_{r}(C)$ consists of arcs of radius $r$, each connecting 
  two vertices of $C$.
  The circular hull $\alpha_{r}(C)$ is dual to
  the intersection $I_{r}(C)$, in the sense that every arc of
  $\alpha_{r}(C)$ is on the circle of radius $r$ centered at a
  vertex of $I_{r}(C)$, and every arc of $I_{r}(C)$ is on the
  circle of radius $r$ centered at a vertex of $\alpha_{r}(C)$.
  This implies that $\alpha_{r}(C)\neq\emptyset$ if and only if
  $I_{r}(C) \neq \emptyset$.
  Therefore, $\alpha_{r}(C)$ is nonempty if and only if $C$ is $r$-coverable.  
  
  For a vertex $v$ of $\alpha_{r}(C)$, we denote by $\ccw(v)$ its counterclockwise 
  neighbor on $\partial \alpha_{r}(C)$, and by $\cw(v)$ its clockwise 
  neighbor on $\partial \alpha_{r}(C)$.
  We denote by $\arc{v}$ the arc of $\alpha_{r}(C)$ 
  connecting $v$ and $\ccw(v)$ of $\alpha_{r}(C)$.
  By $\delta(v)$, we denote the supporting disk of the arc $\arc{v}$ of $\alpha_{r}(C)$, that is, the disk containing $\arc{v}$ in its boundary.
  We may use $\alpha(C)$ and $I(C)$ to denote $\alpha_r(C)$ and $I_r(C)$, respectively, if it is understood from context. 
Since $\alpha(C)$ and $\alpha(S)$ are the same, we obtain the following observation on 
subchains from the observations on planar points~\cite{EDELS1983,HERSH1991}. 
  \begin{observation}[\cite{EDELS1983,HERSH1991}]\label{obs:alpha} 
  For a subchain $C$ of $\partial P$ the followings hold.
    \begin{enumerate}
      \item For any subchain $C' \subseteq C$, $\alpha_{r}(C') \subseteq \alpha_{r}(C)$~.
      \item A vertex of $C$ appears as a vertex in $\alpha_{r}(C)$ if and only if 
      $C$ is $r$-coverable by a disk containing the vertex on its boundary.
      \item An arc of radius $r$ connecting two vertices of $C$ appears as an arc of
      $\alpha_{r}(C)$ if and only if $C$ is $r$-coverable by the supporting disk of the arc.
    \end{enumerate}
  \end{observation}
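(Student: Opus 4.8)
The plan is to reduce all three statements to the finite vertex set $S$ of $C$ and then invoke the known characterization of the circular hull of planar points~\cite{EDELS1983,HERSH1991}. This reduction is legitimate because the discussion preceding the observation already established $\alpha_r(C)=\alpha_r(S)$ and $I_r(C)=I_r(S)$, and because the vertices of $C$ are by definition exactly the points of $S$. Hence each assertion about the subchain $C$ is literally the corresponding assertion about the point set $S$, and it suffices to verify them for $S$.

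For the first part I would argue directly from the defining intersection. Writing $\alpha_r(X)=\bigcap\{D : D \text{ is a disk of radius } r \text{ with } X\subseteq D\}$, the inclusion $C'\subseteq C$ implies that every disk of radius $r$ covering $C$ also covers $C'$, so the family of disks defining $\alpha_r(C)$ is a subfamily of the one defining $\alpha_r(C')$. Intersecting a larger family yields a smaller region, hence $\alpha_r(C')\subseteq\alpha_r(C)$. Nothing beyond monotonicity of intersections of families is needed here.

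For the second and third parts I would work through the duality between $\alpha_r(C)$ and $I_r(C)$ recalled above: each arc of $\alpha_r(C)$ lies on a circle of radius $r$ centered at a vertex of $I_r(C)$, and a point of $I_r(C)$ is precisely the center of a radius-$r$ disk covering $C$. For the third part, an arc connecting two vertices $v,w$ of $C$ is carried by a circle of radius $r$ whose center $c$ is at distance $r$ from both; that arc appears on $\partial\alpha_r(C)$ if and only if $c\in I_r(C)$, which by definition holds if and only if the supporting disk centered at $c$ covers $C$. The second part then follows by observing that $v$ is a vertex of $\alpha_r(C)$ exactly when it is the common endpoint of two such boundary arcs, each supported by a covering disk through $v$; conversely, a covering disk $D$ with $v\in\partial D$ forces $\alpha_r(C)\subseteq D$, while $v\in C\subseteq\alpha_r(C)$, so $v\in\partial\alpha_r(C)$.

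The main obstacle is this last implication: showing that a vertex $v$ of $C$ lying on $\partial\alpha_r(C)$ is genuinely a \emph{corner} of $\alpha_r(C)$, and not merely an interior point of a single arc. This is the point at which the general circular position assumption enters, since a vertex of $S$ interior to a hull arc would have to be cocircular with the two endpoints of that arc on a common radius-$r$ circle, a degeneracy that the assumption is invoked to exclude. I expect the remainder to be routine: once this correspondence and the exclusion of degenerate configurations are in place, every statement coincides exactly with the established circular-hull facts for planar point sets in~\cite{EDELS1983,HERSH1991}, and the observation for the subchain $C$ follows.
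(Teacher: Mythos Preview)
Your approach matches the paper's exactly: the paper does not give a proof of this observation at all, but simply notes in the preceding sentence that since $\alpha(C)=\alpha(S)$ the statement follows from the known circular-hull facts for planar point sets in~\cite{EDELS1983,HERSH1991}. Your reduction to $S$ and appeal to those references is precisely this, and the additional direct arguments you sketch (monotonicity of the intersection for part~1, duality for parts~2 and~3) go beyond what the paper itself supplies.

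One minor remark: the degeneracy you flag in part~2 (a vertex of $S$ lying in the interior of a hull arc) would mean \emph{three} vertices cocircular on a radius-$r$ circle, whereas the paper's general circular position assumption only rules out \emph{four} cocircular vertices; so that assumption does not literally exclude the configuration you describe. This does not affect correctness, since the cited references establish the characterization without needing such an exclusion, but the justification you give for why the obstacle disappears is not quite the right one.
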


  \noindent
  For a point $x\in \partial P$, let $f_{r}(x)$ be the farthest point on $\partial P$ from $x$ in the counterclockwise direction along $\partial P$ such that $P_{x,f_r(x)}$ is $r$-coverable.
  We denote by $D_1^{r}(x)$ the disk of radius $r$ covering $P_{x,f_r(x)}$.
  Similarly, let $g_{r}(x)$ be the farthest point on $\partial P$ from $x$ in the clockwise direction 
  such that $P_{g_r(x),x}$ is $r$-coverable,
  and denote by $D_2^{r}(x)$ the disk of radius $r$ covering $P_{g_r(x),x}$.
  Note that $x$ may not lie on the boundaries of $D_{1}^{r}$ and $D_{2}^{r}$.
  We may use $f(x)$, $D_{1}(x)$, $g(x)$, and $D_{2}(x)$ by omitting
  the subscript and superscript $r$ in the notations, if they are understood from context.

  Since we can determine in $O(n)$ time whether $P$ is $r$-coverable~\cite{MEGI1984},
  we assume that $P$ is not $r$-coverable in the remainder of the paper.
  For a fixed $r$, consider any two points $t$ and $t'$ in $\partial P$ 
  satisfying $t<_{t} t'<_{t} f(t)$.
  Then $P_{t',f(t)}$ is $r$-coverable, which implies
  $f(t)\leq_{t'} f(t')$. Thus, we have the following observation.

  \begin{observation} \label{obs:monotone}
    For a fixed $r$, as $x$ moves along $\partial P$ in the counterclockwise direction, 
    both $f(x)$ and $g(x)$ move monotonically along $\partial P$ in the counterclockwise direction.
  \end{observation}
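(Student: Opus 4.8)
The plan is to obtain the monotonicity of $f$ directly from the subchain-coverability property of Observation~\ref{obs:alpha}(1), as outlined in the paragraph preceding the statement, and then to treat $g$ by a symmetric argument. The crux is a \emph{local non-retreat} property: if $x \leq_x y \leq_x f(x)$, then $f(x) \leq_y f(y)$.

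I would first establish this local property. Fix $r$ and assume $x \leq_x y \leq_x f(x)$, so that $y$ lies on $P_{x,f(x)}$. Then $P_{y,f(x)}$ is a subchain of $P_{x,f(x)}$, which is $r$-coverable by the definition of $f(x)$; hence $P_{y,f(x)}$ is $r$-coverable by Observation~\ref{obs:alpha}(1). Since $f(y)$ is the farthest counterclockwise point making $P_{y,\cdot}$ $r$-coverable, maximality gives $f(x) \leq_y f(y)$, so $f(x)$ never retreats clockwise as $x$ advances counterclockwise.

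To globalize, I would use the standing assumption that $P$ is not $r$-coverable: the whole boundary is never covered by one radius-$r$ disk, so $x <_x f(x)$ holds strictly for every $x$, and $f(x)$ stays strictly ahead of $x$ throughout a full counterclockwise traversal. Given any $x_1 \leq x_2$ within one revolution, I would insert intermediate points $x_1 = z_0 \leq z_1 \leq \cdots \leq z_k = x_2$ with each $z_{i+1} \leq_{z_i} f(z_i)$, which is possible precisely because $f(z_i)$ is strictly ahead of $z_i$. The local property yields $f(z_i) \leq f(z_{i+1})$ for each $i$, and chaining these gives $f(x_1) \leq f(x_2)$; thus $f(x)$ moves monotonically counterclockwise.

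The claim for $g$ follows by the mirror argument, taking the subchain at the clockwise end. For $x \leq_x y$ close enough that $x$ lies on $P_{g(y),y}$, the chain $P_{g(y),x}$ is a subchain of the $r$-coverable chain $P_{g(y),y}$, hence $r$-coverable by Observation~\ref{obs:alpha}(1); by the maximality defining $g(x)$, the point $g(x)$ then lies at least as far clockwise as $g(y)$, so $g(y)$ lies counterclockwise-ahead of (or at) $g(x)$ and $g$ also advances counterclockwise as $x$ does. The step I expect to require the most care is the globalization: one must verify that $x$ never overtakes $f(x)$ (equivalently, that $g(x)$ never overtakes $x$), which is exactly what the assumption that $P$ is not $r$-coverable guarantees, allowing the local non-retreat statements to be stitched into a single monotone pass around $\partial P$.
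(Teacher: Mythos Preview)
Your argument is essentially the paper's own: the paragraph preceding the observation gives exactly your local non-retreat step (for $t<_t t'<_t f(t)$, the subchain $P_{t',f(t)}$ is $r$-coverable, hence $f(t)\le_{t'} f(t')$), and the paper leaves it at that. Your added globalization via intermediate points and the symmetric treatment of $g$ are correct elaborations; the only quibble is that Observation~\ref{obs:alpha}(1) is not quite the right citation for ``a subchain of an $r$-coverable chain is $r$-coverable'' (that fact is immediate from the definition, since the same disk works), but this does not affect the argument.
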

  
  \section{Sequential Decision Algorithm}
  \label{sec:sequential}
  In this section, we consider the decision problem: given a real value $r$, decide whether $r\geq r^{*}$,
  that is, whether there are two congruent disks of radius $r$ whose union covers $P$.
  
  For a point $x$ moving along $\partial P$, we consider 
  two functions, $f(x)$ and $g(x)$.
  If there is a point $x\in \partial P$ such that $f(x) \geq_{x} g(x)$, the union of 
  $P_{x,f(x)}$ and $P_{g(x),x}$ is $\partial P$. Thus there are 
  two congruent disks of radius $r$ whose union covers $P$, and
  the decision algorithm returns \texttt{yes}. 
  Otherwise, we conclude that $r < r^{*}$, and the decision algorithm returns \texttt{no}.
  For a subchain $P_{x,y}$ of $\partial P$, we use $\alpha(x,y)$ to denote $\alpha(P_{x,y})$,
  and $I(x,y)$ to denote $I(P_{x,y})$.

\subsection{Characterizations} \label{sec:sequential.characterization}
  For a fixed $r$, $I(x,f(x))$ is a point, and it is the center of $\alpha(x,f(x))$.
  Moreover, $\alpha(x,f(x))$ and $D_1(x)$ are the same.   
  Observe that $D_1(x)$ is defined by two or three vertices of $P_{x,f(x)}$, 
  which we call the \emph{determinators} of $D_1(x)$.
  For our purpose, we define four \emph{types} of $D_1(x)$ by its determinators:
  (\texttt{T1}) $x$, $f(x)$, and one vertex. (\texttt{T2}) $x$ and $f(x)$.
(\texttt{T3}) $f(x)$ and one vertex. (\texttt{T4}) $f(x)$ and two vertices. 
  See Fig.~\ref{fig:types} for an illustration of the four types.

  \begin{figure}[ht]%
    \centering
    {\includegraphics[width=\textwidth]{./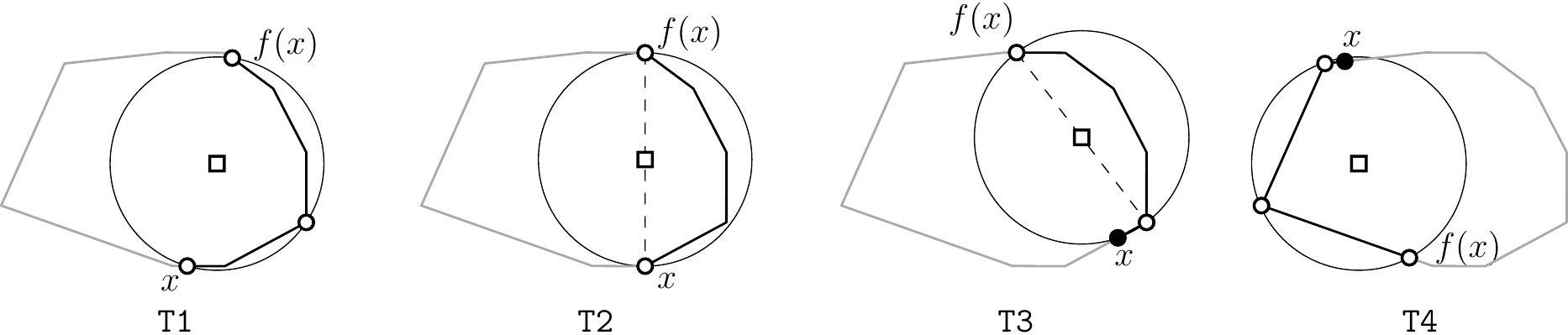}}%
    \caption{Four types of $D_1(x)$ and its determinators (small circles).}
    \label{fig:types}%
  \end{figure}

  We denote by $e(a)$ the edge of $P$ containing a point $a\in\partial P$. 
  If $a$ is a vertex of $P$, $e(a)$ denotes the edge of $P$ incident to $a$ 
  lying in the counterclockwise direction from $a$.
  For a point $x$ moving along $\partial P$, the \emph{combinatorial structure} of $f(x)$ is determined by $e(x)$, $e(f(x))$, and the determinators of $D_1(x)$. 
  We call each point $x$ in $\partial P$ at which the combinatorial structure of $f(x)$ changes 
  a \emph{breakpoint} of $f(x)$.
  For $x\in \partial P$ lying in between two consecutive breakpoints, we can compute $f(x)$ using $e(x)$, $e(f(x))$, and $D_1(x)$.

  Consider $x$ moving along $\partial P$ starting from $x_0$ on $\partial P$ 
  in counterclockwise direction. Let $x_{1} = f(x_{0})$, $x_{2} = f(x_{1})$ and $x_3=f(x_2)$.
  We simply use the index $i$ instead of $x_i$ for $i=0,\ldots,3$ if it is understood from context. 
  For instance, we use $P_{i,j}$ to denote $P_{x_i,x_j}$, and $\leq_{i}$ to denote $\leq_{x_i}$.
  For the rest of the section, we describe how to handle the case that $x$ moves along 
  $P_{0,1}$. The cases that $x$ moves along $P_{1,2}$ and $P_{2,3}$ can be handle analogously.
  As $x$ moves along $P_{0,1}$, $f(x)$ moves along $P_{1,2}$ in the same direction by
  Observation~\ref{obs:monotone}. 

  \begin{lemma}\label{lem:4chains}
  For any fixed $r\geq r^*$, the union of $P_{0,1}$, $P_{1,2}$, and $P_{2,3}$ is $\partial P$.
  \end{lemma}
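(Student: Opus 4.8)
The plan is to exploit the hypothesis $r\geq r^*$ through the two-disk covering it guarantees. As noted in the introduction, if $P$ is covered by two disks of radius $r$, then $\partial P$ can be partitioned into two connected $r$-coverable subchains; call them $A=P_{c_1,c_2}$ and $B=P_{c_2,c_1}$, meeting at the two cut points $c_1$ and $c_2$. The strategy is then to show that the three consecutive hops $x_0\to x_1\to x_2\to x_3$ sweep past both cut points and wrap once all the way around $\partial P$, so that the concatenation $P_{0,1}\cup P_{1,2}\cup P_{2,3}=P_{x_0,x_3}$ is all of $\partial P$.

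First I would isolate the two elementary facts that drive the argument. By the definition of $f$, whenever a subchain $P_{x,y}$ is $r$-coverable we have $f(x)\geq_x y$; and by Observation~\ref{obs:alpha}(1) every subchain of an $r$-coverable subchain is again $r$-coverable, so in particular every sub-arc of $A$ and of $B$ is $r$-coverable.

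Next, assume without loss of generality that $x_0\in A$ (otherwise interchange $A$ and $B$), so that $c_1\leq_{x_0}x_0\leq_{x_0}c_2$. To handle the wrap-around cleanly I would lift $\partial P$ to $\mathbb{R}$, placing $x_0$ at $0$ with the counterclockwise parameter increasing, writing $L$ for the perimeter and identifying each $x_i$ with its lifted position, so that the cut points recur at $c_2,\ c_1+L,\ c_2+L,\dots$. Then I chase one bound per hop: (i) $P_{x_0,c_2}\subseteq A$ is $r$-coverable, so $x_1\geq c_2$; (ii) since $x_1\geq c_2$, the piece of $B$ from $x_1$ to the next copy of $c_1$ is a sub-arc of $B$, hence $r$-coverable, so $x_2\geq c_1+L$; (iii) since $x_2\geq c_1+L$, the piece of the next copy of $A$ from $x_2$ to $c_2+L$ is a sub-arc of $A$, hence $r$-coverable, so $x_3\geq c_2+L$. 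Because $x_0=0\leq c_2$, the last bound gives $x_3\geq c_2+L\geq L$, i.e.\ $x_3$ lies a full turn beyond $x_0$, and therefore $P_{x_0,x_3}=\partial P$. If any hop overshoots its target cut point, the corresponding bound only gets easier, since $f(x)\geq_x x$ trivially, so all three bounds hold unconditionally.

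The one point to get right---and the reason three hops are needed rather than two---is the cyclic bookkeeping: two hops reach only $c_1+L$, which falls short of $x_0+L$ unless $x_0=c_1$, so it is the third hop, carrying us from $c_1$ past $c_2$ (which satisfies $c_2\geq_{x_0}x_0$), that actually closes the loop. I expect the main obstacle to be nothing more than making this wrap-around rigorous, which the lift to $\mathbb{R}$ together with the overshoot remark settles.
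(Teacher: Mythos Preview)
Your argument is correct and follows essentially the same approach as the paper: use the two $r$-coverable arcs coming from an optimal covering as landmarks and chase $f$ past them. The paper's proof differs only cosmetically---it stays in the cyclic order $\leq_0$ and splits into two cases (depending on whether $x_1$ has already passed the far cut point), whereas your lift to $\mathbb{R}$ absorbs that case split into the single overshoot remark. One small omission: you should dispose of the degenerate case where $P$ itself is $r$-coverable (then $P_{0,1}=\partial P$ and there is nothing to prove), since your lift implicitly assumes each hop advances by less than a full perimeter.
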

  \iffull 
  \begin{proof}
    If $P$ is $r$-coverable, $P_{0,1}$ is $\partial P$. Assume that $P$ is not $r$-coverable. 
    For any fixed $r\geq r^*$, there are two congruent disks $D_1$ and $D_2$  of radius $r$ 
    whose union covers $P$. 
    Let $y$ and $z$ be the points of $\partial P$ such that
    $P_{y,z}$ is covered by $D_1$, and $P_{z,y}$ is covered by $D_2$.
    Without loss of generality, assume $x_0\in P_{y,z}$. Then $z \leq_0 f(y) \leq_0 x_1$ along $\partial P$, 
    because $P_{y,z}$ is covered by $D_1$ and by Observation~\ref{obs:monotone}.
    If $x_1\leq_0 y$, then $y\leq_1 x_2$ and thus $x_0 \leq_y x_3$. 
    If $y\leq_0 x_1$, then $x_0\leq_y x_2$.
    Thus, the union of $P_{0,1}$, $P_{1,2}$, and $P_{2,3}$ is $\partial P$.
  \end{proof}
  \fi
  
  The structure of a circular hull can be expressed by 
  the circular sequence of arcs appearing on the boundary of the circular hull.
  There is a 1-to-1 correspondence between a breakpoint of $f(x)$ for $x$ moving along $P_{0,1}$ 
  and a structural change to $\alpha(x,f(x))$. 
  This is because $D_1(x)$ and $\alpha(x,f(x))$ are the same.
  Thus, we maintain $D_1(x)$ for $x$ moving along $P_{0,1}$ and capture every structural change to 
  $\alpha(x,f(x))$. Observe that the boundary of $\alpha(x,f(x))$ consists of a connected boundary part of 
  $\alpha(x,x_1)$, a connected boundary part of $\alpha(x_1,f(x))$, and two arcs of $D_1(x)$ 
  connecting $\alpha(x,x_1)$ and $\alpha(x_1,f(x))$. See Fig.~\ref{fig:type1} for an illustration.
  
  \begin{figure}[ht]%
    \centering
    {\includegraphics[width=.9\textwidth]{./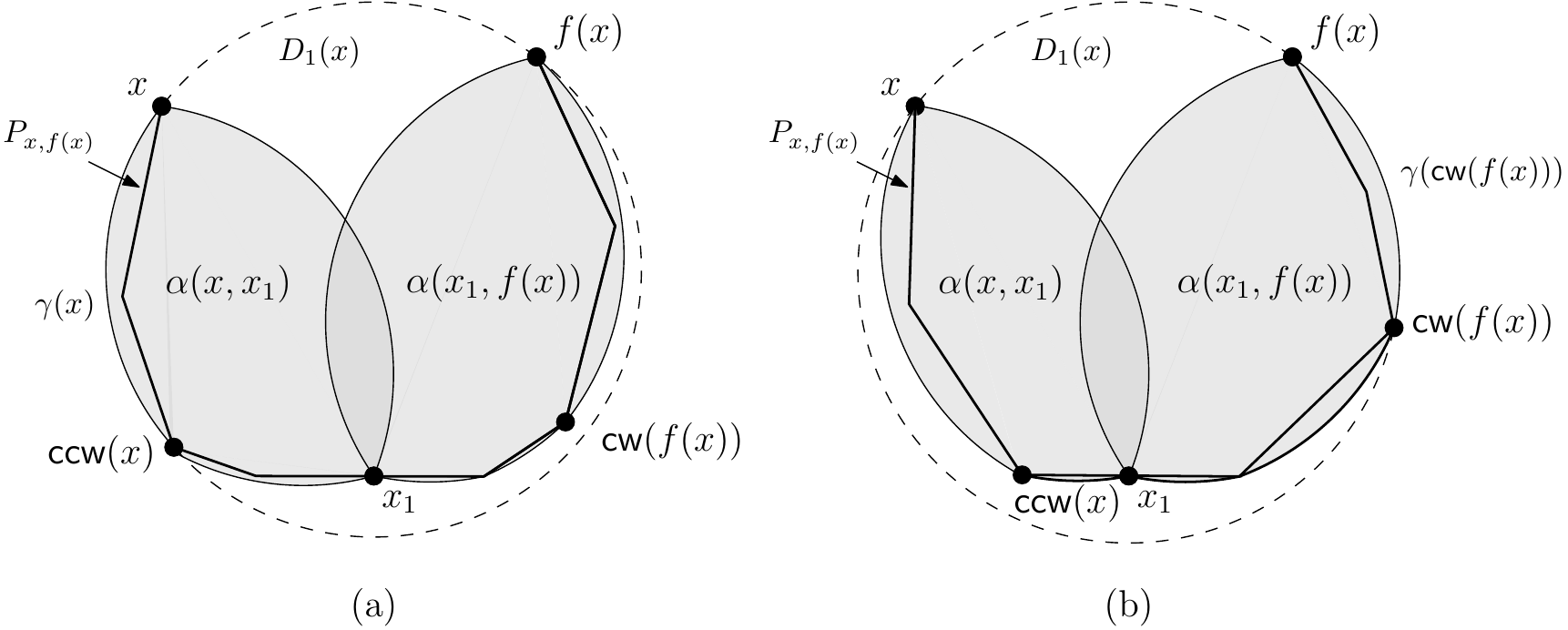}}%
    \caption{Two cases of $D_1(x)$ of type \texttt{T1}. Two arcs (dashed) of $D_1(x)$ connecting $\alpha(x,x_1)$
    and $\alpha(x_1,f(x))$.
    (a) If $v$ is on the boundary of $\alpha(x,x_1)$, $D_1(x)$ is
    $\delta(x)$ of $\alpha(x,x_1)$. (b) If $v$ is on the boundary of $\alpha(x_1,f(x))$, $D_1(x)$ is
    $\delta(\textsf{cw}(f(x)))$.}
    \label{fig:type1}%
  \end{figure}
  
  The following lemmas give some characterizations to the four types of $D_1(x)$.
  Recall that $\delta(v)$ is the supporting disk of the arc $\arc{v}$ of an circular hull, that is, the disk containing $\arc{v}$ on its boundary.
  \iffull
  \begin{lemma} \label{lem:type1}
    Any disk $D_1(x)$ of type \texttt{T1} is 
    $\delta(x)$ of $\alpha(x,x_1)$ or $\delta(\cw(f(x)))$ of $\alpha(x_1,f(x))$. 
  \end{lemma}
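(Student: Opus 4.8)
The plan is to split the argument according to where the third determinator lies. By the definition of type \texttt{T1}, the disk $D_1(x)$ is the unique radius-$r$ disk through its three determinators $x$, $f(x)$, and one further vertex $v$ of $P_{x,f(x)}$, and these three points are the only points of $P_{x,f(x)}$ on $\partial D_1(x)$. The vertex $v$ lies either in $P_{x,x_1}$ or in $P_{x_1,f(x)}$. I would show that in the first case $D_1(x)=\delta(x)$ of $\alpha(x,x_1)$ and in the second case $D_1(x)=\delta(\cw(f(x)))$ of $\alpha(x_1,f(x))$. Since the two situations are mirror images of each other (reflecting the roles of $x$ and $f(x)$, and of $\ccw$ and $\cw$), it suffices to treat $v\in P_{x,x_1}$ in detail.

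So assume $v\in P_{x,x_1}$. First I would record two membership facts. Because $D_1(x)$ covers the subchain $P_{x,x_1}\subseteq P_{x,f(x)}$ and has both $x$ and $v$ on its boundary, Observation~\ref{obs:alpha}(2) makes $x$ and $v$ vertices of $\alpha(x,x_1)$; by the defining property of type \texttt{T1}, every other vertex of $P_{x,x_1}$ is strictly interior to $D_1(x)$. Next, let $c$ be the center of $D_1(x)$, so that $c=I(x,f(x))$. Since the radius-$r$ disk centered at $c$ covers $P_{x,x_1}$, we have $c\in I(x,x_1)$, and $c$ lies at distance exactly $r$ from the two determinators $x$ and $v$.

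The key step is to promote $c$ from a boundary point of $I(x,x_1)$ to a genuine vertex of it. Because $c$ lies on the two radius-$r$ circles centered at $x$ and at $v$, while $I(x,x_1)$ is contained in both of the corresponding disks, $c$ lies on $\partial I(x,x_1)$; and since $x$ and $v$ are vertices of $\alpha(x,x_1)$, these two circles are exactly the active circles bounding $I(x,x_1)$. By the $\alpha$--$I$ duality recalled in the preliminaries, a vertex of $I(x,x_1)$ is dual to an arc of $\alpha(x,x_1)$ joining two consecutive vertices, and its supporting disk is the radius-$r$ disk centered at that vertex of $I$. I would argue that $c$ must be such a vertex, determined by $x$ and $v$, forcing $x$ and $v$ to be adjacent on $\partial\alpha(x,x_1)$. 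Reading off the counterclockwise cyclic order $x,v,f(x)$ of the determinators along $\partial D_1(x)$ — which matches their order along the convex chain — then identifies $v=\ccw(x)$, so the arc in question is $\arc{x}$ and its supporting disk is $\delta(x)$. As this supporting disk is the radius-$r$ disk centered at $c$, it equals $D_1(x)$, giving $D_1(x)=\delta(x)$. The case $v\in P_{x_1,f(x)}$ is handled identically, yielding $D_1(x)=\delta(\cw(f(x)))$ of $\alpha(x_1,f(x))$.

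I expect the main obstacle to be precisely this promotion of $c$ to a vertex of $I(x,x_1)$ together with the identification of the correct neighbor. One must rule out that $c$ is merely an interior point of a single bounding arc of $\partial I(x,x_1)$: here the general circular position assumption is used to guarantee that the two circles centered at $x$ and $v$ meet transversally (no tangency), so that $c$ is a true corner where two arcs meet rather than a point where a second circle only grazes a single arc. The orientation step — deciding between $\ccw(x)$ and $\cw(x)$ — is then resolved by the cyclic order of the determinators, but this must be stated carefully, since it is exactly what distinguishes the conclusion $\delta(x)$ from its mirror image.
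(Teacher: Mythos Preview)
Your argument is correct and reaches the same conclusion as the paper, but by a different route. The paper relies directly on the structural observation stated just before the lemma: the boundary of $\alpha(x,f(x))$ decomposes into a connected piece of $\partial\alpha(x,x_1)$, a connected piece of $\partial\alpha(x_1,f(x))$, and two connecting arcs. Since $\alpha(x,f(x))=D_1(x)$ is a full disk and $x,v$ are both vertices of $\alpha(x,x_1)$, the first arc of $\partial D_1(x)$ starting at $x$ in counterclockwise direction already belongs to $\partial\alpha(x,x_1)$; hence that arc is $\arc{x}$ of $\alpha(x,x_1)$ and its supporting disk is $D_1(x)$. You instead pass to the dual, locating the center $c$ inside $I(x,x_1)$ and promoting it to a vertex of $I(x,x_1)$, which via duality yields the adjacency of $x$ and $v$ on $\partial\alpha(x,x_1)$. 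The paper's route is shorter because it reuses the boundary decomposition it has just set up; your route is more self-contained and makes the adjacency of $x$ and $v$ explicit rather than implicit in a picture.

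One remark on the step you flag as the main obstacle: it is in fact easier than your discussion suggests, and general circular position is not the crux. You have already recorded that every vertex of $P_{x,x_1}$ other than $x$ and $v$ lies strictly inside $D_1(x)$; equivalently, no vertex of $\alpha(x,x_1)$ other than $x$ and $v$ is at distance $r$ from $c$. Now for any vertex $w$ of $\alpha(x,x_1)$, the arc it contributes to $\partial I(x,x_1)$ is exactly $\{p\in I(x,x_1):|pw|=r\}$. Since $c\in I(x,x_1)$ and $|cx|=|cv|=r$, the point $c$ lies on the arc contributed by $x$ and on the arc contributed by $v$; these are distinct arcs because $x\neq v$, so $c$ is the vertex where they meet and $x,v$ are adjacent on $\partial\alpha(x,x_1)$. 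The transversality worry does not arise: you are not intersecting two circles in the plane but identifying which boundary arcs of $I(x,x_1)$ pass through $c$, and that is forced by the distance conditions alone.
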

  \begin{proof}
  Since the determinators of $D_1(x)$ are $x$, $f(x)$, and a vertex $v$ of $P_{x,f(x)}$,
  they all appear on the boundary of $\alpha(x,f(x))$.
  Assume that $v$ is on the boundary of $\alpha(x,x_1)$. Then the boundary portion of $\alpha(x,f(x))$,
  from $x$ to $v$ in counterclockwise order, is from the boundary of $\alpha(x,x_1)$, and
  $\ccw(x)$ of $\alpha(x,f(x))$ lies on the boundary of $\alpha(x,x_1)$.
  Since $\arc{x}$ of $\alpha(x,x_1)$ is on the boundary of $D_1(x)$, 
  $\delta(x)$ of $\alpha(x,x_1)$ is $D_1(x)$. See Fig.~\ref{fig:type1}(a).
  A similar argument can be made for the case that $v$ is on the boundary of 
  $\alpha(x_1,f(x))$. See Fig.~\ref{fig:type1}(b). 
  Therefore, $D_1(x)$ is $\delta(x)$ of $\alpha(x,x_1)$ or $\delta(\cw(f(x)))$ of $\alpha(x_1,f(x))$. 
  \end{proof}

\begin{lemma} \label{lem:type2}
For a disk $D_1(x)$ of type \texttt{T2}, the Euclidean distance between $x$ and $f(x)$ is $2r$.
\end{lemma}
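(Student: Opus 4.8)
The claim is that when $D_1(x)$ has type \texttt{T2}, meaning its only determinators are the two endpoints $x$ and $f(x)$ of the subchain $P_{x,f(x)}$, the Euclidean distance $|x\,f(x)|$ must equal exactly $2r$. Let me think about why this should be true and sketch a proof.

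The plan is to argue that a disk of radius $r$ is pinned down by exactly two points on its boundary only when those two points are diametrically opposite. First I would recall that $D_1(x) = \alpha(x,f(x))$ is the minimum-radius-$r$ circular hull, equivalently the unique disk of radius $r$ realizing the covering of $P_{x,f(x)}$ as tightly as possible, and that by definition of type \texttt{T2} the determinators are precisely $x$ and $f(x)$, so both lie on $\partial D_1(x)$ while no third vertex of $P_{x,f(x)}$ does. The key geometric observation is that if $|x\,f(x)| < 2r$, then a disk of radius $r$ having both $x$ and $f(x)$ on its boundary is not unique: there is a one-parameter family of such disks whose centers lie on the perpendicular bisector of the segment $x\,f(x)$, specifically the two centers at distance $\sqrt{r^2 - |x\,f(x)|^2/4}$ on either side of the segment.

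The heart of the argument, which I expect to be the main obstacle, is ruling out the strict inequality by using the extremal nature of $f(x)$. Since $f(x)$ is the \emph{farthest} counterclockwise point such that $P_{x,f(x)}$ is $r$-coverable, the intersection $I_r(P_{x,f(x)})$ is a single point (this is stated: ``$I(x,f(x))$ is a point''), and dually $\alpha(x,f(x))$ is this unique disk. I would use this uniqueness: if only two boundary points determined the disk and they were not diametrically opposite, the two admissible centers would both yield valid radius-$r$ disks covering the two determinators, and I must show both would in fact cover the whole chain $P_{x,f(x)}$ — contradicting that $I_r(P_{x,f(x)})$ is a single point rather than a segment. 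The cleanest route is to observe that with only $x$ and $f(x)$ on the boundary and no third determinator, all other vertices lie strictly inside $D_1(x)$, so perturbing the center slightly along the bisector keeps them inside; hence if $|x\,f(x)|<2r$ we could slide the center to enlarge the covered chain beyond $f(x)$, contradicting the maximality of $f(x)$.

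I would organize the final write-up as follows. Suppose for contradiction that $|x\,f(x)| < 2r$. Then the center $c$ of $D_1(x)$ does not lie on the segment $x\,f(x)$, and moving $c$ infinitesimally along the perpendicular bisector toward the ``open'' side (the side away from the bulk of the chain, in the counterclockwise sense past $f(x)$) keeps $x$ and $f(x)$ within distance $r$ to first order while all interior vertices, being at distance strictly less than $r$, remain covered. This perturbed radius-$r$ disk then covers a strictly longer counterclockwise subchain, so $P_{x,y}$ is $r$-coverable for some $y >_x f(x)$, contradicting the definition of $f(x)$ as the farthest such point. The only way to avoid this is $|x\,f(x)| = 2r$, in which case $c$ is the midpoint of $x\,f(x)$, no perpendicular-bisector freedom exists, and $x, f(x)$ are a diameter of $D_1(x)$. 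I expect the delicate point to be making the perturbation argument rigorous with respect to direction — one must verify that the slack gained at one endpoint is not immediately lost at the other, which follows because at $|x\,f(x)|<2r$ the two boundary points subtend an angle less than $\pi$ at $c$, so there is a genuine direction of motion decreasing the distance to both simultaneously, or at least not increasing either beyond $r$ while strictly extending coverage past $f(x)$.
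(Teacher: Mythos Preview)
Your argument is correct: if $|x\,f(x)|<2r$ and only $x$ and $f(x)$ lie on $\partial D_1(x)$, then the center $c$ is a corner of the lens $\bar D(x,r)\cap\bar D(f(x),r)$, and since every other vertex of $P_{x,f(x)}$ is strictly interior to $D_1(x)$ you can move $c$ into the lens and still cover the chain, contradicting that $I(x,f(x))$ is a single point. The small wobble in your write-up (``within distance $r$ to first order'') is easily repaired exactly as you indicate at the end: because the two boundary circles meet transversally at $c$, there is an open cone of directions along which the distance to both $x$ and $f(x)$ strictly decreases.

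The paper, by contrast, dispatches the lemma in one sentence: ``Since $x$ and $f(x)$ are the determinators of $D_1(x)$, their Euclidean distance is $2r$.'' It is simply invoking the standard fact that a disk of fixed radius is pinned down by exactly two boundary points only when they are antipodal; your perturbation/uniqueness argument is precisely the proof of that fact, so you are supplying the detail the paper leaves implicit rather than taking a genuinely different route. What your version buys is self-containment---a reader who has never seen the ``two determinators $\Rightarrow$ diameter'' characterization gets a complete argument---at the cost of length; the paper's version buys brevity by assuming that characterization is folklore.
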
    
\begin{proof}
Since $x$ and $f(x)$ are the determinators of $D_1(x)$, their Euclidean distance is $2r$.
\end{proof}

\begin{lemma} \label{lem:types34}
   If a disk $D_1(x)$ of type \texttt{T3} or \texttt{T4} has $x$ on its boundary, 
   $D_1(x)$ is $\delta(x)$ of $\alpha(x,x_1)$ or $\delta(\cw(f(x)))$ of $\alpha(x_1,f(x))$. 
   Moreover, for any point $y$ in the interior of $P_{x,v}$,
   $D_1(y)$ has the same type as $D_1(x)$, where $v$ is the determinator of
   $D_1(x)$ closest to $x$ in counterclockwise order.
\end{lemma}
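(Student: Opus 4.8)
The plan is to establish the two assertions in turn, leaning on the circular-hull calculus already used for Lemma~\ref{lem:type1} and on the fact that $\alpha(x,f(x))=D_1(x)$ is a full disk, so that every arc on its boundary is an arc of $D_1(x)$ and $\delta(w)=D_1(x)$ for each vertex $w$ of $\alpha(x,f(x))$. Write $p:=I(x,f(x))$ for the single center of $D_1(x)$, and recall that by Observation~\ref{obs:alpha}(1) both $\alpha(x,x_1)$ and $\alpha(x_1,f(x))$ are contained in $D_1(x)$, where $x\leq_x x_1\leq_x f(x)$.

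For the first assertion, note that since $x$ lies on $\partial D_1(x)$ it is a vertex of $\alpha(x,f(x))$; let $u:=\ccw(x)$ be its counterclockwise neighbor there, so $\arc{x}$ is the arc of $\partial D_1(x)$ from $x$ to $u$. I would split on the position of $u$. If $u\leq_x x_1$, then $x$ and $u$ are vertices of $P_{x,x_1}$, and since $D_1(x)$ covers $P_{x,x_1}$ with $x,u$ on its boundary, Observation~\ref{obs:alpha}(2)--(3) makes the arc from $x$ to $u$ an arc of $\alpha(x,x_1)$; hence it is $\arc{x}$ of $\alpha(x,x_1)$ and $\delta(x)$ of $\alpha(x,x_1)$ is $D_1(x)$. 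If instead $u>_x x_1$, then $x$ is the only vertex of $\alpha(x,f(x))$ lying in $P_{x,x_1}$, so every determinator lies in $P_{x_1,f(x)}$; since the determinators already pin $p$ (a diametral pair for \texttt{T3}, a circumscribing triple for \texttt{T4}), we get $I(x_1,f(x))=\{p\}$ and $\alpha(x_1,f(x))=D_1(x)$, whence $\delta(\cw(f(x)))$ of $\alpha(x_1,f(x))$ is $D_1(x)$.

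For the second assertion, let $v$ be the determinator of $D_1(x)$ closest to $x$ and fix $y$ in the interior of $P_{x,v}$; the crux is to show $f(y)=f(x)$ and $D_1(y)=D_1(x)$. Since $P_{y,f(x)}\subseteq P_{x,f(x)}$ is covered by $D_1(x)$, we have $f(y)\geq_y f(x)$. For the reverse inequality I would argue by contradiction: if $f(y)>_y f(x)$, then $D_1(y)$ covers $P_{y,f(y)}$, which contains every determinator of $D_1(x)$ because these lie in $P_{v,f(x)}$ and $y<_x v$. As the determinators pin $D_1(x)$ uniquely among radius-$r$ disks — two points at distance $2r$ lie on a single such disk, and a triple on a radius-$r$ circle with its center inside their triangle likewise does — this forces $D_1(y)=D_1(x)$. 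But then $D_1(x)$ covers $P_{x,y}\cup P_{y,f(y)}=P_{x,f(y)}$ with $f(y)>_x f(x)$, contradicting the maximality of $f(x)$. Hence $f(y)=f(x)$, and since $D_1(y)$ covers the same determinators, $D_1(y)=D_1(x)$. The type is then unchanged because the diametral pair (\texttt{T3}) or circumscribing triple (\texttt{T4}) inside $P_{v,f(x)}$ remains the unique minimal set defining $D_1(y)$, so $y$ itself is never a determinator.

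The step I expect to be the main obstacle is this uniqueness (locking) argument: deducing $D_1(y)=D_1(x)$ from the mere fact that $D_1(y)$ covers the determinators of $D_1(x)$. This is exactly where the separation into \texttt{T3} and \texttt{T4} is needed, together with $I(x,f(x))$ being a single point, to guarantee that the determinators alone — and not the chain endpoint $x$ — already pin the disk. A secondary point that needs care is verifying that $y$ does not itself enter the minimal defining set (and thereby change the type); this is handled by noting that the locking configuration sits entirely in $P_{v,f(x)}$ and, under general circular position, stays the unique minimal defining set as $y$ ranges over the interior of $P_{x,v}$.
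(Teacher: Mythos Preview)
Your argument is correct and follows essentially the same route as the paper: the first assertion is handled exactly as in Lemma~\ref{lem:type1} (case-splitting on where $\ccw(x)$ of $\alpha(x,f(x))$ falls relative to $x_1$), and the second by noting that all determinators of $D_1(x)$ lie in $P_{y,f(y)}$, so the locking property of \texttt{T3}/\texttt{T4} forces $D_1(y)=D_1(x)$. The paper's own proof is extremely terse (three sentences) and leaves both the locking step and the deduction $f(y)=f(x)$ implicit, whereas you spell them out; there is no substantive difference in approach.
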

\begin{proof}
By an argument similar to the proof of Lemma~\ref{lem:type1}, we can show that 
$D_1(x)$ is $\delta(x)$ of $\alpha(x,x_1)$ or $\delta(\cw(f(x)))$ of $\alpha(x_1,f(x))$.
By Observation~\ref{obs:monotone}, we have $f(x)\leq_v f(y)$.
Thus the determinators of $D_1(x)$ lying in between $v$ and $f(x)$ are all contained in $P_{y,f(y)}.$
Therefore, $D_1(x)$ and $D_1(y)$ are the same.
\end{proof}
\else
\begin{lemma} \label{lem:D1.types}
The following characterizations hold for each type of $D_1(x)$.
\begin{itemize}
\item For $D_1(x)$ of type \texttt{T1}, it is 
    $\delta(x)$ of $\alpha(x,x_1)$ or $\delta(\cw(f(x)))$ of $\alpha(x_1,f(x))$. 
\item For $D_1(x)$ of type \texttt{T2}, the Euclidean distance between $x$ and $f(x)$ is $2r$.
\item For $D_1(x)$ of type \texttt{T3} or \texttt{T4} containing $x$ on its boundary, 
   it is $\delta(x)$ of $\alpha(x,x_1)$ or $\delta(\cw(f(x)))$ of $\alpha(x_1,f(x))$. 
   Moreover, for any point $y$ in the interior of $P_{x,v}$,
   $D_1(y)$ has the same type as $D_1(x)$, where $v$ is the determinator of
   $D_1(x)$ closest to $x$ in counterclockwise order.
\end{itemize}
\end{lemma}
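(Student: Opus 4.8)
The plan is to treat the three bullets separately, exploiting throughout that $I(x,f(x))$ is a single point, so $D_1(x)=\alpha(x,f(x))$ is the unique radius-$r$ disk enclosing $P_{x,f(x)}$ and the chain touches $\partial D_1(x)$ exactly at the determinators (two of them forming a diameter, or three of them with the center in their convex hull). I would also lean on the stated decomposition of $\partial\alpha(x,f(x))$ into a connected arc of $\partial\alpha(x,x_1)$, a connected arc of $\partial\alpha(x_1,f(x))$, and two bridging arcs supported by $D_1(x)$, together with Observation~\ref{obs:alpha}.

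First I would settle type \texttt{T1}. Its vertex determinator $v$ lies in $P_{x,x_1}$ or in $P_{x_1,f(x)}$. If $v\in P_{x,x_1}$, then the only points of $P_{x,x_1}$ on $\partial D_1(x)$ are the determinators $x$ and $v$, so none lies strictly between them. Since $D_1(x)$ covers $P_{x,x_1}$, Observation~\ref{obs:alpha}(3) applied to $C=P_{x,x_1}$ shows the arc of $D_1(x)$ from $x$ to $v$ is an arc of $\alpha(x,x_1)$; being the arc from $x$ to its counterclockwise neighbor, it is $\arc{x}$, so its supporting disk $\delta(x)$ of $\alpha(x,x_1)$ equals $D_1(x)$. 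The case $v\in P_{x_1,f(x)}$ is symmetric, read clockwise from $f(x)$: the arc of $D_1(x)$ from $v$ to $f(x)$ is $\arc{\cw(f(x))}$ of $\alpha(x_1,f(x))$, giving $\delta(\cw(f(x)))=D_1(x)$.

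Type \texttt{T2} is immediate: since $D_1(x)$ is pinned by the two points $x$ and $f(x)$ alone, they must be diametrically opposite on $\partial D_1(x)$, so their Euclidean distance is $2r$. For types \texttt{T3} and \texttt{T4} with $x\in\partial D_1(x)$, the point $x$ is a vertex of $\alpha(x,f(x))$, so the same Observation~\ref{obs:alpha}(3) argument as in \texttt{T1} (splitting on which chain contains the determinator adjacent to $x$) yields that $D_1(x)$ is $\delta(x)$ of $\alpha(x,x_1)$ or $\delta(\cw(f(x)))$ of $\alpha(x_1,f(x))$. For the ``moreover'' claim, I would let $v$ be the determinator closest to $x$ in counterclockwise order and take $y$ in the interior of $P_{x,v}$. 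By Observation~\ref{obs:monotone}, $f(x)\leq_v f(y)$, so every determinator of $D_1(x)$, all lying in $P_{v,f(x)}$, belongs to $P_{y,f(y)}$ and is thus enclosed by $D_1(y)$. Since these determinators pin $D_1(x)$ uniquely among radius-$r$ disks (two of them as a diameter, or three on its boundary circle with center inside their triangle), the radius-$r$ disk $D_1(y)$ enclosing them must coincide with $D_1(x)$; hence $f(y)=f(x)$ and $D_1(y)=D_1(x)$, which in particular has the same type.

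The routine parts will be \texttt{T1} and \texttt{T2}. The hard part is the ``moreover'' step for \texttt{T3}/\texttt{T4}: one must argue that dropping the chain prefix $P_{x,y}$ cannot let the enclosing disk rotate and push $f$ beyond $f(x)$. The crux is that $x$ is not a determinator, so the surviving determinators in $P_{v,f(x)}$ still rigidly fix the radius-$r$ disk; making this rigidity precise---via the diameter/circumscribed-circle uniqueness of the determinators combined with the monotonicity of $f$ from Observation~\ref{obs:monotone}---is the delicate point, whereas everything else follows directly from Observation~\ref{obs:alpha}.
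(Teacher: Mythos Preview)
Your proposal is correct and follows essentially the same approach as the paper. In the full version the three bullets are Lemmas~\ref{lem:type1}, \ref{lem:type2}, and \ref{lem:types34}, proved exactly along your lines: the \texttt{T1} case (and the first claim for \texttt{T3}/\texttt{T4}) by locating the determinator adjacent to $x$ in one of the two sub-chains and reading off the supporting disk; \texttt{T2} by the diameter observation; and the ``moreover'' clause via Observation~\ref{obs:monotone} to place all determinators of $D_1(x)$ inside $P_{y,f(y)}$, then concluding $D_1(y)=D_1(x)$. Your explicit uniqueness argument for the last step (that the determinators, being either a diametral pair or a triple with circumradius $r$ and center in their convex hull, admit only one enclosing disk of radius $r$) is in fact more detailed than the paper, which simply asserts ``Therefore, $D_1(x)$ and $D_1(y)$ are the same.'' One small imprecision: your phrase ``the only points of $P_{x,x_1}$ on $\partial D_1(x)$ are the determinators $x$ and $v$'' is not strictly needed and not quite justified; what matters is only that the arc of $\partial D_1(x)$ incident to $x$ on the counterclockwise side lies on $\partial\alpha(x,x_1)$, which already follows from Observation~\ref{obs:alpha}(3) (or, as the paper phrases it, from the boundary decomposition of $\alpha(x,f(x))$) regardless of whether $v=\ccw(x)$.
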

\fi
If there is a change
to $e(x)$, $e(f(x))$, $\ccw(x)$ of $\alpha(x,x_1)$ or $\cw(f(x))$ of $\alpha(x_1,f(x))$, the combinatorial structure of $f(x)$ changes.
Therefore, we compute the changes to $e(x)$ and $\ccw(x)$ of $\alpha(x,x_1)$ for a point 
$x$ moving along $P_{0,1}$, and compute the changes to $e(y)$ and $\cw(y)$ of $\alpha(x_1,y)$ for 
a point $y$ moving along $P_{1,2}$.
We call the points inducing these changes the \emph{event points}.
From this, we detect the combinatorial changes to $f(x)$.

\subsection{Data structures and decision algorithm} \label{sec:sequential.datastructures}
Wang~\cite{Wang2020full} proposed a semi-dynamic (insertion-only) data structure for maintaining the circular hull for points in the plane that are inserted in increasing order of 
their $x$-coordinates.
It is also mentioned that the algorithm can be modified to work for points that are inserted 
in the sorted order around a point. Since the vertices of $P$ are already sorted around 
any point in the interior of $P$, we can use the algorithm for our purpose.

  \begin{lemma}[\protect{Theorem 5 in \cite{Wang2020full}}]\label{lem:Wang}
    We can maintain the circular hull of a set $Q$ of points such that when a new point to the right of all points of $Q$ is inserted, we can decide in $O(1)$ amortized time whether $\alpha(Q)$ is nonempty, and  update $\alpha(Q)$.
  \end{lemma}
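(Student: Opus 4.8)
The plan is to adapt the classical incremental construction of the convex hull for points inserted in sorted order to the circular-hull setting. I would represent $\alpha_r(Q)$ by its boundary, split into an \emph{upper} and a \emph{lower} circular chain that meet at the leftmost and rightmost vertices of $Q$. Since by Observation~\ref{obs:alpha} every vertex of the circular hull is a point of $Q$, and $Q$ is inserted in increasing order of $x$-coordinate, the vertices of each chain occur in $x$-monotone order. Each chain is stored in a stack whose entries are the hull vertices together with the radius-$r$ arc connecting consecutive vertices.

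When the new point $q$, lying to the right of all current points, is inserted, I would append it to the top of both stacks and then repeatedly pop vertices that are no longer on the boundary. The pop predicate is the circular analogue of the straight-line right-turn test: for the top vertex $u$ with the vertex $w$ immediately below it, I would test in $O(1)$ time whether $u$ lies in the interior of the region bounded by the radius-$r$ arc through $w$ and $q$ on the appropriate side; if so, $u$ is interior to $\alpha_r(\{w,\ldots,q\})$ and is popped, justified by part~3 of Observation~\ref{obs:alpha}. Because each inserted point is pushed exactly once onto each stack and can be popped at most once, the total number of stack operations over a sequence of insertions is linear, which yields the claimed $O(1)$ amortized update time.

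To decide nonemptiness I would exploit the duality between $\alpha_r$ and $I_r$ recalled in the preliminaries: $\alpha_r(Q)\neq\emptyset$ exactly when $Q$ is $r$-coverable. The hull can turn empty only at an insertion, and precisely when the newly appended point together with the current chains can no longer be enclosed by a radius-$r$ disk---combinatorially, when the update causes the upper and lower chains to cross, or equivalently when the arc-turn predicate fails to close the two chains into a bounded region. I would detect this condition with the same constant-time predicate used for popping, reporting emptiness the first time it is triggered and thereafter.

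The main obstacle I expect is the geometric predicate itself, together with the precise characterization of the moment the hull becomes empty. Unlike the convex-hull turn test, the arc-based predicate must account both for local convexity and for the global feasibility constraint that no two vertices lie farther apart than $2r$ (the type-\texttt{T2} obstruction), so the popping rule and the emptiness test are entangled. Verifying that a single $O(1)$ predicate simultaneously drives correct popping and correctly flags the nonempty-to-empty transition, while preserving the amortized bound, is the delicate part; the amortized analysis itself is routine once the predicate is in place.
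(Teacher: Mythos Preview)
The paper does not prove this lemma at all; it is imported verbatim as Theorem~5 of Wang~\cite{Wang2020full} and used as a black box. Your sketch is a plausible outline of how such a semi-dynamic circular-hull structure is built---and indeed the stack-based incremental scheme with amortized pops is exactly what the present paper \emph{invokes} (not proves) in the proof of Lemma~\ref{lem:precomputeEvent1}---but there is no argument here to compare yours against. For the details you correctly identify as delicate, in particular the constant-time predicate that simultaneously governs popping and detects the nonempty-to-empty transition, you would have to consult Wang's original paper.
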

  
  We can modify the algorithm to work not only for point insertions, but also for edge insertions.
  Let $v_1,\ldots, v_i$ be the vertices of $P$ inserted so far in order from $v_1$. When $v_{i+1}$ is inserted, 
  we compute the points $z$ on edge $v_iv_{i+1}$ at which a structural change to $\alpha(v_1,z)$ occurs.

\begin{lemma} \label{lem:precomputeEvent1}
 For a point $x$ moving along $P_{0,1}$, $e(x)$ and $\ccw(x)$ of $\alpha(x,x_1)$ change 
 $O(|P_{0,1}|)$ times.
 We can compute the event points $x$ at which $\ccw(x)$ of $\alpha(x,x_1)$ changes 
 in $O(|P_{0,1}|)$ time.
\end{lemma}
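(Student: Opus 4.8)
The plan is to prove the two claims separately, disposing of the easy quantity $e(x)$ first and then concentrating on the hull neighbour $\ccw(x)$ of $\alpha(x,x_1)$. For $e(x)$, the point is that $e(x)$ changes exactly when $x$ passes a vertex of $P$ lying in $P_{0,1}$; hence $e(x)$ changes at most $|P_{0,1}|$ times, directly from the definition of $e(\cdot)$.

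The heart of the argument is a monotonicity property of hull membership that I would extract from Observation~\ref{obs:alpha}. Fix a polygon vertex $v$ in $P_{0,1}$. I claim that once $v$ becomes a vertex of $\alpha(x,x_1)$, it remains a vertex of $\alpha(x,x_1)$ for every later position of $x$, as long as $v$ still lies in the chain (that is, $x\le_0 v$). Indeed, by Observation~\ref{obs:alpha}, $v$ is a vertex of $\alpha(x,x_1)$ if and only if $P_{x,x_1}$ is $r$-coverable by a disk $D$ with $v\in\partial D$. If $x\le_0 x'\le_0 v$, then $P_{x',x_1}\subseteq P_{x,x_1}\subseteq D$ and still $v\in\partial D$, so the same disk witnesses that $v$ is a vertex of $\alpha(x',x_1)$. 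Thus, as $x$ moves along $P_{0,1}$, polygon vertices can only \emph{join} $\partial\alpha(x,x_1)$; none leaves the hull except by dropping out of the chain when $x$ passes it.

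Using this I would bound the number of changes of $\ccw(x)$. Observe that $\ccw(x)$ is the first vertex of the chain, in counterclockwise order from $x$, that appears on $\partial\alpha(x,x_1)$; the polygon vertices strictly between $x$ and $\ccw(x)$ are not vertices of $\alpha(x,x_1)$, hence lie in its interior, so removing them as $x$ advances does not alter the hull. Consequently $\ccw(x)$ can change only in two ways: (i) a vertex $v$ with $x<_0 v<_0\ccw(x)$ joins $\partial\alpha(x,x_1)$ and becomes the new $\ccw(x)$; or (ii) $x$ reaches the current $\ccw(x)$ and passes it, so $\ccw(x)$ advances to the next hull vertex. By the monotonicity property each vertex joins the hull at most once, so events of type (i) number at most $|P_{0,1}|$ in total; and each vertex is crossed by $x$ at most once, so events of type (ii) number at most $|P_{0,1}|$ as well. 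Crossing a non-hull vertex leaves $\ccw(x)$ unchanged, and no hull vertex can leave $\partial\alpha(x,x_1)$ while $x$ merely slides along an edge (again by the monotonicity property), so these exhaust all events. Therefore $\ccw(x)$ changes $O(|P_{0,1}|)$ times, which together with the bound on $e(x)$ establishes the first statement.

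For the computation I would run the semi-dynamic circular-hull structure of Lemma~\ref{lem:Wang}, in the edge-insertion form described above, on the chain $P_{0,1}$ traversed in reverse: keeping $x_1$ as the fixed anchor and inserting the vertices of $P_{0,1}$ in clockwise order, with the moving endpoint $x$ sliding along the most recently inserted edge. Under this relabeling the tracked quantity is precisely $\ccw(x)$ of $\alpha(x,x_1)$, and an event point of $\ccw(x)$ is exactly a structural change reported when a new edge is inserted; the type-(ii) events are simply the polygon vertices, while the type-(i) join events are the breakpoints computed on the current edge. By Lemma~\ref{lem:Wang} each insertion, together with the reporting of the structural changes it induces on that edge, costs $O(1)$ amortized time, so all event points are obtained in $O(|P_{0,1}|)$ total time. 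The main obstacle I expect is the correctness of this reduction: I must verify that the insertion-only, edge-augmented structure, applied to the reversed chain, reports exactly the join events of $\ccw(x)$ within the stated amortized bound, and that the reversal of the insertion order is compatible with the sorted-around-a-point requirement of Lemma~\ref{lem:Wang}.
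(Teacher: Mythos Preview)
Your proposal is correct and follows essentially the same approach as the paper. The paper traverses $P_{0,1}$ in the clockwise direction (so the chain grows), maintains the hull with a stack, and amortises via ``once a vertex is popped it is never pushed again''; your monotonicity statement---that as the chain shrinks a polygon vertex can only join $\partial\alpha(x,x_1)$ and never leave---is exactly the contrapositive of that stack invariant, derived cleanly from Observation~\ref{obs:alpha}, and your type~(i)/type~(ii) event split is the counterclockwise reading of the paper's pop-and-intersect step on each edge. For the computational part you invoke the same reversed, edge-augmented insertion scheme the paper spells out, so the two proofs coincide.
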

\iffull
\begin{proof}
  Imagine that $x$ moves along the subchain $P_{0,1}$ in \emph{clockwise} order, 
  from $x_1$ to $x_0$. We consider $e(x)$ and $\ccw(x)$ of $\alpha(x,x_1)$ 
  while inserting the vertices of $P_{0,1}= \left\langle x_0,\ldots, v_{i}, v_{i+1},\ldots, x_1  \right\rangle$
  in reverse order, one by one from $x_1,$ and compute the circular hull of the inserted vertices. 
  Then $e(x)$ changes $O(|P_{0,1}|)$ times at the vertices of $P_{0,1}$.
  
  When $v_i$ is inserted, we have $\alpha(v_{i+1},x_1)$.
  We also have the vertices of the hull stored in a stack in clockwise order with $v_{i+1}$ at the top.
  Our goal is to compute $\alpha(v_i,x_1)$ and to compute the points $x$ on ${v_{i}v_{i+1}}$ 
  at which $\ccw(x)$ of $\alpha(x,x_1)$ changes. To do this,
  we pop vertices repeatedly from the stack until $\ccw(v_{i})$ of $\alpha(v_i,x_1)$ 
  becomes the top element of the stack. When a vertex $v$ is popped from the stack, 
  we consider the supporting disk $D$ of the arc connecting $v$ and 
  the vertex at the top of the stack at the moment,
  and compute the intersection of $\partial D$ with the edge ${v_{i}v_{i+1}}$.
  Since $\ccw(x)$ of $\alpha(x,x_1)$ changes when $x$ reaches such an intersection,
  we consider those intersection points $x$ as the event points inducing 
  the changes to $\ccw(x)$ of $\alpha(x,x_1)$.
    
  Observe that once a vertex is popped from the stack, it is never inserted to the stack again.
  Therefore, $\ccw(x)$ of $\alpha(x,x_1)$ changes $O(| P_{0,1}|)$ times while $x$ moves 
  along $P_{0,1}$, and we can compute the event points in $O(| P_{0,1}|)$ time.
\end{proof}
\fi
  From Lemma~\ref{lem:precomputeEvent1}, we obtain the following Corollary.

 \begin{corollary} \label{cor:precomputeEvent2}
 For a point $y$ moving along $P_{1,2}$,  
 $e(y)$ and $\cw(y)$ of $\alpha(x_1,y)$ change $O(|P_{1,2}|)$ times. We can compute 
 the event points $y$ at which $\cw(y)$ of $\alpha(x_1,y)$ changes in $O(|P_{1,2}|)$ time.
 \end{corollary}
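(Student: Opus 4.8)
The plan is to observe that Corollary~\ref{cor:precomputeEvent2} is the orientation-reversed counterpart of Lemma~\ref{lem:precomputeEvent1} and to carry out the mirror of its proof. In Lemma~\ref{lem:precomputeEvent1} the counterclockwise endpoint $x_1$ of the subchain is fixed, the clockwise endpoint $x$ moves, and we track the counterclockwise neighbor $\ccw(x)$ of $\alpha(x,x_1)$; the construction inserts the vertices of $P_{0,1}$ in reverse (clockwise) order from $x_1$. For the corollary the two traversal directions are interchanged: the clockwise endpoint $x_1$ is fixed, the counterclockwise endpoint $y$ moves along $P_{1,2}$, and we track the clockwise neighbor $\cw(y)$ of $\alpha(x_1,y)$. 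Since the circular hull $\alpha_r$ is defined without reference to a traversal direction, every structural fact used in Lemma~\ref{lem:precomputeEvent1} holds verbatim after this interchange, so the argument transfers directly.

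Concretely, I would insert the vertices of $P_{1,2}=\langle x_1,\ldots,v_i,v_{i+1},\ldots,x_2\rangle$ in counterclockwise order starting from $x_1$ (the natural growing direction of $\alpha(x_1,y)$ as $y$ advances), maintaining the hull by the data structure of Lemma~\ref{lem:Wang} together with a stack holding the current hull vertices in counterclockwise order with the newest vertex on top. When $v_{i+1}$ is inserted, I pop vertices from the stack until $\cw(v_{i+1})$ of $\alpha(x_1,v_{i+1})$ becomes the top element. Each time a vertex $v$ is popped, the supporting disk $\delta$ of the arc connecting $v$ to the vertex then at the top determines the point on edge $v_iv_{i+1}$ at which $\cw(y)$ changes, namely the intersection of $\partial\delta$ with $v_iv_{i+1}$; I record this intersection as the event point, exactly mirroring the proof of Lemma~\ref{lem:precomputeEvent1}.

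The counting and timing then follow from the same amortization: a vertex, once popped, never re-enters the stack, so $\cw(y)$ changes at most $O(|P_{1,2}|)$ times over the whole motion of $y$, and all corresponding event points are produced in $O(|P_{1,2}|)$ total time. The edge $e(y)$ changes only when $y$ passes a vertex of $P_{1,2}$, hence $O(|P_{1,2}|)$ times as well. I do not expect a genuine obstacle here; the only point requiring care is to verify that interchanging the two traversal directions faithfully preserves the circular-hull structure and the stack invariant—so that inserting counterclockwise while reading off $\cw(y)$ is the exact mirror of inserting clockwise while reading off $\ccw(x)$—after which the claim is immediate from Lemma~\ref{lem:precomputeEvent1}.
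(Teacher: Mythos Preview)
Your proposal is correct and matches the paper's approach: the paper does not give a separate proof but simply states that the corollary follows from Lemma~\ref{lem:precomputeEvent1}, and your mirror argument (swapping clockwise/counterclockwise roles and inserting vertices of $P_{1,2}$ in counterclockwise order while tracking $\cw(y)$) is exactly the symmetry the paper is invoking. No further work is needed.
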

 
  The event points subdivide $P_{0,1}$ and $P_{1,2}$ into 
  $O(|P_{0,1}|)$ and $O(|P_{1,2}|)$ pieces, respectively. 
  Since the vertices of $P_{0,1}$ and $P_{1,2}$ 
  are also event points (defined by the changes to $e(x)$ and $e(y)$), 
  each piece is a segment contained in an edge. Moreover,
  any point $x$ in a segment of $P_{0,1}$ has the same $\ccw(x)$ of $\alpha(x,x_1)$, and
  any point $y$ in a segment of $P_{1,2}$ has the same $\cw(y)$ of $\alpha(x_1,y)$.
  
  \iffull
  \begin{lemma} \label{lem:complexity}
   For a fixed $r\geq r^*$, there are $O(n)$ breakpoints of $f(x)$ and $g(x)$.
  \end{lemma}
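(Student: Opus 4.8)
The plan is to bound the number of breakpoints of $f(x)$ as $x$ makes one full counterclockwise loop around $\partial P$; the bound for $g(x)$ will then follow by the symmetric argument in the clockwise direction. The first step is to pin down exactly what can trigger a breakpoint. By definition the combinatorial structure of $f(x)$ is determined by $e(x)$, $e(f(x))$, and the determinators of $D_1(x)$, and the type characterizations (Lemmas~\ref{lem:type1}, \ref{lem:type2}, and~\ref{lem:types34}) show that in every type the determinators of $D_1(x)$ are governed by $\ccw(x)$ of $\alpha(x,x_1)$ together with $\cw(f(x))$ of $\alpha(x_1,f(x))$, the type-\texttt{T2} case being governed directly by $x$ and $f(x)$. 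Consequently the combinatorial structure of $f(x)$ stays fixed on any subarc on which none of the four quantities $e(x)$, $e(f(x))$, $\ccw(x)$ of $\alpha(x,x_1)$, and $\cw(f(x))$ of $\alpha(x_1,f(x))$ changes, so every breakpoint of $f(x)$ must coincide with one of the event points already studied.

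Next I would reduce the full loop to three phases using Lemma~\ref{lem:4chains}: since $P_{0,1}\cup P_{1,2}\cup P_{2,3}=\partial P$, letting $x$ sweep the chains $P_{0,1}$, $P_{1,2}$, $P_{2,3}$ in turn makes $x$ traverse all of $\partial P$, while by Observation~\ref{obs:monotone} the image $f(x)$ sweeps $P_{1,2}$, $P_{2,3}$, and $P_{3,4}$ respectively, each phase using the appropriate intermediate reference point $x_1$, $x_2$, $x_3$. Within a single phase, say $x\in P_{0,1}$ and $f(x)\in P_{1,2}$, Lemma~\ref{lem:precomputeEvent1} bounds the number of changes of $e(x)$ and of $\ccw(x)$ of $\alpha(x,x_1)$ by $O(|P_{0,1}|)$, and Corollary~\ref{cor:precomputeEvent2}, applied with $y=f(x)$ ranging over $P_{1,2}$, bounds the number of changes of $e(f(x))$ and of $\cw(f(x))$ of $\alpha(x_1,f(x))$ by $O(|P_{1,2}|)$. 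Hence this phase contributes $O(|P_{0,1}|+|P_{1,2}|)$ breakpoints, and the two remaining phases contribute $O(|P_{1,2}|+|P_{2,3}|)$ and $O(|P_{2,3}|+|P_{3,4}|)$ by the identical argument with reference points $x_2$ and $x_3$.

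Summing the three phases gives a total of $O(|P_{0,1}|+|P_{1,2}|+|P_{2,3}|+|P_{3,4}|)$ breakpoints. To finish I would observe that each chain $P_{i,i+1}=P_{x_i,f(x_i)}$ is a proper subchain of $\partial P$, since $P$ is assumed not $r$-coverable and hence $P_{x_i,f(x_i)}\neq\partial P$; therefore $|P_{i,i+1}|\le n+2$ and the sum of the four terms is $O(n)$. Repeating the whole argument in the clockwise direction bounds the number of breakpoints of $g(x)$ by $O(n)$ as well, which establishes the lemma.

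The main obstacle I anticipate is the first step: carefully justifying that the four tracked quantities capture the combinatorial structure in every type, including the subtlety flagged in the preliminaries that $x$ need not lie on the boundary of $D_1(x)$. Thus Lemma~\ref{lem:types34} applies directly only to its ``$x$ on the boundary'' branch, and in the complementary branch one must still argue that the determinators of $D_1(x)$ remain among the vertices appearing on the two already-tracked hull pieces $\alpha(x,x_1)$ and $\alpha(x_1,f(x))$, so that no untracked quantity can induce a breakpoint. Once this ``breakpoint $\Rightarrow$ event point'' implication is secured, the counting is a routine application of Lemma~\ref{lem:precomputeEvent1} and Corollary~\ref{cor:precomputeEvent2} together with the chain-covering property of Lemma~\ref{lem:4chains}.
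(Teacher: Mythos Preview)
Your argument hinges on the implication ``breakpoint $\Rightarrow$ event point,'' i.e., that the combinatorial structure of $f(x)$ cannot change on a subarc where $e(x)$, $e(f(x))$, $\ccw(x)$ of $\alpha(x,x_1)$, and $\cw(f(x))$ of $\alpha(x_1,f(x))$ are all fixed. This implication is false, and the paper's own proof is largely devoted to working around its failure. Knowing the four tracked quantities only tells you which vertices are \emph{candidates} for the determinators of $D_1(x)$; it does not tell you which subset actually realizes $D_1(x)$. On a single segment $T$ with all four quantities fixed, the type of $D_1(x)$ can switch between \texttt{T1} and \texttt{T2} (this happens exactly when $|x-f(x)|$ passes through $2r$, which is a purely metric condition on two moving points on fixed edges), the \texttt{T1} determinator can flip between $\ccw(x)$ and $\cw(f(x))$ (the two disks of Lemma~\ref{lem:type1} coincide at an instant where $x,\ccw(x),\cw(f(x)),f(x)$ are cocircular on a radius-$r$ circle; $x$ and $f(x)$ are not polygon vertices, so general circular position does not forbid this), and the type can degenerate to \texttt{T3} or \texttt{T4} when $x$ slides off $\partial D_1(x)$. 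Each of these is a breakpoint occurring strictly in the interior of $T$, hence not an event point.

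The paper's proof therefore does not attempt your implication. Instead it fixes a maximal segment $T$ on which the four quantities are constant and argues directly, via Lemmas~\ref{lem:type1}--\ref{lem:types34}, that each of the three phenomena above can occur only $O(1)$ times inside $T$: once the type becomes \texttt{T3} or \texttt{T4} the disk freezes (Lemma~\ref{lem:types34}); the \texttt{T1} determinator flip corresponds to a single coincidence of two fixed-parameter circles; and the \texttt{T1}$\leftrightarrow$\texttt{T2} switches are roots of a low-degree equation in the position of $x$ along the edge. Combining ``$O(n)$ segments'' with ``$O(1)$ internal breakpoints per segment'' gives the $O(n)$ bound. Your outline is correct up to the event-point count, but it is missing precisely this per-segment $O(1)$ analysis, which is the substantive part of the proof.
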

  \begin{proof}
    For a point $x$ moving along $P_{0,1}$, $e(x)$ and $\ccw(x)$ of $\alpha(x,x_1)$ change $O(|P_{0,1}|)$ times by Lemma~\ref{lem:precomputeEvent1}.
    Since $f(x)$ moves along $P_{1,2}$ while $x$ moves along $P_{0,1}$,
    $e(f(x))$ and $\cw(f(x))$ of $\alpha(x_1,f(x))$ change $O(|P_{1,2}|)$ times
    by Corollary~\ref{cor:precomputeEvent2}.
    Combining them, there are $O(|P_{0,2}|)$ changes to $e(x)$, $e(f(x))$, $\ccw(x)$ 
    of $\alpha(x,x_1)$, and $\cw(f(x))$ of $\alpha(x_1,f(x))$ while $x$ moves along $P_{0,1}$.
    For a fixed $r\geq r^*$, $\partial P$ is covered by the union of at most three subchains 
    $P_{0,1}, P_{1,2}$, and $P_{2,3}$ by Lemma~\ref{lem:4chains}, 
    and thus the total number of changes to $e(x)$, $e(f(x))$, $\ccw(x)$, and $\cw(f(x))$ is $O(n)$.

    Let $T$ be a segment contained in an edge of $P_{0,1}$ 
    such that $e(x)$, $e(f(x))$, $\ccw(x)$ of $\alpha(x,x_1)$, 
    and $\cw(f(x))$ of $\alpha(x_1,f(x))$ remain the same for any $x\in T$.
    We count the breakpoints of $f(x)$ in the interior of $T$. 
    We count the breakpoints of $f(x)$ in the following three cases:
    (1) the type of $D_1(x)$ changes to \texttt{T3} or \texttt{T4}, (2) while $D_1(x)$ is of type \texttt{T1}, 
    the determinators of $D_1(x)$ changes, (3) the type of $D_1(x)$ switches between \texttt{T1} and \texttt{T2}.

    Consider the breakpoints induced by case (1). Once the type of $D_1(x)$ changes to \texttt{T3} or \texttt{T4}, the determinators remain the same while $x$ moves along $T$ by Lemma~\ref{lem:types34}.
    Thus, there is at most one breakpoint induced by case (1).
            
    A breakpoint induced by case (2) is a moment $x=t$, where $\delta(t)$ of $\alpha(t,x_1)$ and $\delta(\cw(f(t)))$ of $\alpha(x_1,f(t))$ 
    are the same by Lemma~\ref{lem:type1}. 
    For $\ccw(x)$ of $\alpha(x,x_1)$ and $\cw(x)$ of $\alpha(x_1,x)$, there is at most one such disk.
    Thus, there is at most one breakpoint induced by case (2).

    Finally, we count the breakpoints of case (3).
    A breakpoint induced by case (3) is a moment $x=t$ where $f(t)$ is at Euclidean distance 
    $2r$ from $t$. Assume that $D_1(x)$ is $\delta(x)$ of $\alpha(x,x_1)$. 
    Then there are at most two such moments $t$ 
    with $\ccw(t)$ on $\partial D_1(t)$
    at which the Euclidean distance between $t$ and $f(t)$ is $2r$.
    For the case that $D_1(x)$ is $\delta(\cw(f(x)))$ of $\alpha(x_1,f(x))$, there are at most two such moments by a similar argument.
    Thus, there are $O(1)$ breakpoints induced by case (3).

    Since there are $O(n)$ segments such that $e(x)$, $e(f(x))$, $\ccw(x)$ of $\alpha(x,x_1)$, 
    and $\cw(f(x))$ of $\alpha(x_1,f(x))$ remain the same for any $x$ in each segment, 
    and there are at most $O(1)$ breakpoints in the interior of 
    a segment, we conclude that there are $O(n)$ breakpoints in total.
    We can show that there are $O(n)$ breakpoints of $g(x)$ by a similar argument.
  \end{proof}

    \begin{figure}[ht]%
    \centering
    {\includegraphics[width=\textwidth]{./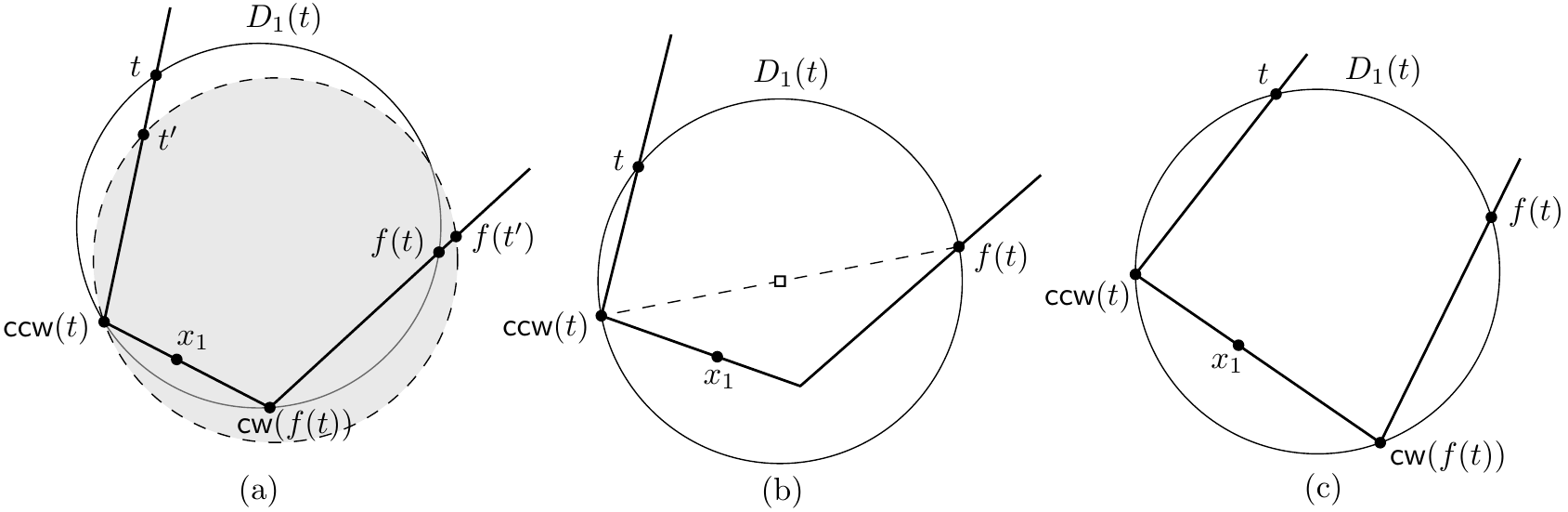}}%
    \caption{Changes to the determinators of $D_1(x)$. 
    (a) At $x=t$, the determinators of $D_1(x)$ of type $\texttt{T1}$ changes from $x,\ccw(x),f(x)$ to $x,\cw(f(x)),f(x)$. 
    (b) At $x=t$, the type of $D_1(x)$ changes to $\texttt{T3}$.
    (c) At $x=t$, the type of $D_1(x)$ changes to $\texttt{T4}$.    }
    \label{fig:typeChange}%
  \end{figure}

  \begin{lemma} \label{lem:SqCompute}
    For a fixed $r\geq r^*$, the breakpoints of $f(x)$ and
    $g(x)$ can be computed in $O(n)$ time.
  \end{lemma}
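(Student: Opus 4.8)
The goal is to compute all breakpoints of $f(x)$ (and symmetrically $g(x)$) in $O(n)$ total time, matching the $O(n)$ bound on their number established in Lemma~\ref{lem:complexity}. The plan is to run a single coordinated sweep of $x$ along $\partial P$, maintaining just enough structure to detect each breakpoint as it occurs, and to charge the work at each breakpoint to the data-structure operations already bounded in Lemma~\ref{lem:precomputeEvent1} and Corollary~\ref{cor:precomputeEvent2}.

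First I would set up the sweep over the three subchains $P_{0,1}$, $P_{1,2}$, $P_{2,3}$ that cover $\partial P$ by Lemma~\ref{lem:4chains}, handling $P_{0,1}$ in detail and the other two analogously. The preprocessing step invokes Lemma~\ref{lem:precomputeEvent1} to compute, in $O(|P_{0,1}|)$ time, the sorted sequence of event points at which $e(x)$ or $\ccw(x)$ of $\alpha(x,x_1)$ changes, and invokes Corollary~\ref{cor:precomputeEvent2} to compute, in $O(|P_{1,2}|)$ time, the event points at which $e(y)$ or $\cw(y)$ of $\alpha(x_1,y)$ changes as $y=f(x)$ moves along $P_{1,2}$. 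Merging these two monotone sequences (they advance in the same direction by Observation~\ref{obs:monotone}) subdivides $P_{0,1}$ into $O(|P_{0,2}|)$ segments $T$, across each of which $e(x)$, $e(f(x))$, $\ccw(x)$ of $\alpha(x,x_1)$, and $\cw(f(x))$ of $\alpha(x_1,f(x))$ are all fixed. The merge is $O(|P_{0,2}|)$ time, and over all three subchains the total is $O(n)$.

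Within a single segment $T$, I would locate the $O(1)$ breakpoints interior to $T$ by directly solving for the three cases identified in the proof of Lemma~\ref{lem:complexity}. For case (1), the transition to type \texttt{T3} or \texttt{T4}, I test whether $x$ leaves the boundary of $D_1(x)$; since the relevant candidate disks $\delta(x)$ of $\alpha(x,x_1)$ and $\delta(\cw(f(x)))$ of $\alpha(x_1,f(x))$ are fixed algebraic functions on $T$ by Lemma~\ref{lem:type1} and Lemma~\ref{lem:types34} (or the combined Lemma~\ref{lem:D1.types}), this reduces to solving a constant-degree equation. For case (2), the change of determinators within type \texttt{T1}, I solve for the moment $t$ at which $\delta(t)$ of $\alpha(t,x_1)$ and $\delta(\cw(f(t)))$ of $\alpha(x_1,f(t))$ coincide, again a constant-size computation yielding at most one root. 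For case (3), the switch between \texttt{T1} and \texttt{T2}, I solve the equation that the Euclidean distance between $x$ and $f(x)$ equals $2r$ (Lemma~\ref{lem:type2}), which on $T$ is a constant-degree equation with $O(1)$ roots. Each candidate root is validated by checking it lies in the interior of $T$ and that $f$ at that point is consistent; the valid ones are reported as breakpoints. This costs $O(1)$ per segment.

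The main obstacle I anticipate is bookkeeping consistency rather than any analytic difficulty: I must ensure that while advancing through the segments of $P_{0,1}$, the value $f(x)$ is tracked in lockstep so that the correct fixed neighbor $\cw(f(x))$ of $\alpha(x_1,f(x))$ and edge $e(f(x))$ are available in each segment without recomputation, and that the candidate disks entering the constant-size equations are the genuinely active determinators of $D_1(x)$ and not stale ones. This is handled by the synchronized merge above, which pairs each segment of $P_{0,1}$ with the segment of $P_{1,2}$ currently containing $f(x)$; the monotonicity of both $f$ and the event sequences (Observation~\ref{obs:monotone}) guarantees a single linear pass suffices. Summing $O(1)$ work per segment over $O(n)$ segments, and adding the $O(n)$ preprocessing from Lemma~\ref{lem:precomputeEvent1} and Corollary~\ref{cor:precomputeEvent2}, gives $O(n)$ time. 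The argument for $g(x)$ is symmetric, sweeping in the clockwise direction, so the total is $O(n)$.
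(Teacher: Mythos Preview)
Your proposal is correct and follows essentially the same approach as the paper: precompute the event points of $\ccw(x)$ and $\cw(f(x))$ via Lemma~\ref{lem:precomputeEvent1} and Corollary~\ref{cor:precomputeEvent2}, sweep $x$ and $f(x)$ in lockstep, and within each resulting segment solve constant-degree equations for the $O(1)$ interior breakpoints coming from the three cases in the proof of Lemma~\ref{lem:complexity}. Two minor points the paper handles more explicitly than you do: it spends a paragraph computing the initial configuration $D_1(x_0)$ and its determinators in $O(n)$ time (your sweep needs a start value that you assume but never construct), and it tracks not just $\ccw(x)$ and $\cw(f(x))$ but up to three consecutive neighbors on each side of the two circular hulls, since the determinators for types \texttt{T3}/\texttt{T4} at a transition can involve vertices beyond the immediate neighbor; these are routine bookkeeping details that fit easily into your framework.
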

  \begin{proof}
    We first compute the segments of $P_{0,1}$ induced by the event points of $e(x)$ and 
    $\ccw(x)$ of $\alpha(x,x_1)$,
    and the segments of $P_{1,2}$ induced by the event points of $e(y)$ and $\cw(y)$ of $\alpha(x_1,y)$
    in $O(|P_{0,2}|)$ time by Lemma~\ref{lem:precomputeEvent1}.
    Then we compute the breakpoints of $f(x)$ for points $x\in P_{0,1}$.

    We compute $D_1(x_0)$ and its determinators. 
    To this end, we find the edge $vv'$ of $P$ containing $f(x_0)$ in $O(n)$ time using Lemma~\ref{lem:Wang}. More precisely, 
    if $D_1(x_0)$ is of type \texttt{T1} or \texttt{T2}, $f(x_0)$ is the event point
    at which $\alpha(x_0,f(x_0))$ becomes a disk of radius $r$, and thus we can find 
    $D_1(x_0)$ and its determinators in $O(n)$ time. 
    If $D_1(x_0)$ is of type \texttt{T3} or \texttt{T4}, 
    $x_0$ may not lie on $\partial D_1(x_0)$.
    If $D_1(x_0)$ is of type \texttt{T3}, we find the point $z$ on $vv'$ for every vertex $u$
    of $\alpha(x_0,v)$
    such that the Euclidean distance between $z$ and $u$ is $2r$. 
    In this way, we find $f(x_0)$ in $O(n)$ time for the case. 
    If $D_1(x_0)$ is of type \texttt{T4}, we find $f(x_0)$ in $O(n)$ time
    by computing the intersection of $vv'$ and the supporting disk of every arc of 
    $\alpha(x_0,v)$.
    Therefore, in $O(n)$ time, we can compute $f(x_0)$ and the type of $D_1(x_0)$.
        
    We then continue to compute $f(x)$ for each type of $D_1(x)$ by checking the breakpoints of $f(x)$
    for $x$ moving along $P_{0,1}$ starting from $x_0$.
    Imagine that $x$ moves along a segment $ab$ such that 
    $e(x)$ and $\ccw(x)$ of $\alpha(x,x_1)$ remain the same, 
    and $f(x)$ moves along a segment $cd$ such that 
    $e(f(x))$ and $\cw(f(x))$ of $\alpha(x_1,f(x))$ remain the same.
    If the type of $D_1(x)$ changes to type \texttt{T1}, \texttt{T3}, or \texttt{T4}, 
    $D_1(x)$ becomes $\delta(x)$ of $\alpha(x,x_1)$ or $\delta(\cw(f(x)))$ of $\alpha(x_1,f(x))$ 
    by Lemma~\ref{lem:type1} and~\ref{lem:types34}. 
    At that moment, the determinators of $D_1(x)$ must lie on the boundary of $\alpha(x,f(x))$. 
    By the general circular position assumption, there can be at most three polygon vertices
    lying on the boundary of $D_1(x)$. Therefore, the determinators of $D_1(x)$ are
    at most three elements among the vertices 
    $\ccw(x)$, $\ccw(\ccw(x))$, $\ccw(\ccw(\ccw(x)))$ of $\alpha(x,x_1)$, 
    $\cw(f(x))$, $\cw(\cw(f(x)))$, and $\cw(\cw(\cw(f(x))))$ of $\alpha(x_1,f(x))$. 
    The number of changes to these vertices is $O(n)$ by an argument similar to 
    Lemma~\ref{lem:precomputeEvent1}. 
    Thus, we can compute $f(x)$ and disk $D_1(x)$ under the assumption that $D_1(x)$ is of type \texttt{T1}, \texttt{T3}, and \texttt{T4} in $O(n)$ time.
    See Fig.~\ref{fig:typeChange} for examples.
    If $D_1(x)$ is of type \texttt{T2}, then the Euclidean distance between $x$ and $f(x)$ is $2r$ by Lemma~\ref{lem:type2}.

    By comparing two $f(x)$ functions, one for the current type of $D_1(x)$ and one for other types of $D_1(x)$, we can compute in $O(1)$ time the next breakpoint and the new determinators and type of $D_1(x)$ for $x$ right after the breakpoint.
    Thus, all breakpoints of $f(x)$ can be computed in $O(n)$ time.
    Similarly, the breakpoints of $g(x)$ can be computed in $O(n)$ time.
  \end{proof}

\else
Let $T$ be a maximal segment contained in an edge of $P_{0,1}$ 
such that $e(x)$, $e(f(x))$, $\ccw(x)$ of $\alpha(x,x_1)$, and $\cw(f(x))$ of $\alpha(x_1,f(x))$ remain the same for any $x \in T$. 
We count the breakpoints of $f(x)$ in the interior of $T$. There are $O(n)$ such segments by Lemmas~\ref{lem:4chains},~\ref{lem:precomputeEvent1} and Corollary~\ref{cor:precomputeEvent2}. 
We count the breakpoints of $f(x)$ by computing point $x$
where the type of $D_1(x)$ or the determinators of $D_1(x)$ changes. 
We show that there are at most $O(1)$ breakpoints in the interior of each maximal segment, 
and therefore there are $O(n)$ breakpoints in total.
In order to compute $f(x)$, we first compute $f(x_0) = x_1$. Then starting from $x=x_0$ and $f(x)=x_1$, we compute $f(x)$ as $x$ moves along $\partial P$ by maintaining the two maximal segments such that $e(x)$ and $\ccw(x)$ of $\alpha(x,x_1)$ remain the same 
and $e(f(x))$ and $\cw(f(x))$ of $\alpha(x_1,f(x))$ remain the same. 
By repeating this process over maximal segments,
we get the following lemma. The details of the process can be found in Appendix.

\begin{lemma} \label{lem:SqCompute}
  For a fixed $r\geq r^*$, there are $O(n)$ breakpoints of $f(x)$ and
  $g(x)$, and they can be computed in $O(n)$ time.
\end{lemma}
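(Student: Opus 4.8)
The plan is to prove the two assertions---the $O(n)$ bound on the number of breakpoints and the $O(n)$-time computation---together, localizing everything to the maximal segments described just above the lemma. By Lemma~\ref{lem:4chains}, $\partial P$ is covered by the three subchains $P_{0,1}$, $P_{1,2}$, and $P_{2,3}$, so I would analyze the breakpoints of $f$ while $x$ traverses each of them; consider $P_{0,1}$. By Lemma~\ref{lem:precomputeEvent1} and Corollary~\ref{cor:precomputeEvent2}, the four quantities $e(x)$, $\ccw(x)$ of $\alpha(x,x_1)$, $e(f(x))$, and $\cw(f(x))$ of $\alpha(x_1,f(x))$ change $O(|P_{0,2}|)$ times together, cutting $P_{0,1}$ into $O(|P_{0,2}|)$ maximal segments $T$ on each of which all four are fixed. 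Summing over the three subchains yields $O(n)$ segments in total, so it suffices to show that the interior of each segment carries only $O(1)$ breakpoints.

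Inside a fixed segment $T$ I would split the breakpoints according to how the type or the determinators of $D_1(x)$ change, invoking the characterizations of Lemma~\ref{lem:D1.types}. First, once the type becomes \texttt{T3} or \texttt{T4}, its ``moreover'' clause fixes the determinators while $x$ advances, so at most one such breakpoint lies in $T$. Second, while $D_1(x)$ is of type \texttt{T1}, a change of determinators is a value $x=t$ at which the two candidate arcs $\delta(t)$ of $\alpha(t,x_1)$ and $\delta(\cw(f(t)))$ of $\alpha(x_1,f(t))$ coincide; since $\ccw(x)$ and $\cw(f(x))$ are constant on $T$, these arcs are fixed and can agree at most once. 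Third, a switch between \texttt{T1} and \texttt{T2} occurs exactly when the Euclidean distance between $x$ and $f(x)$ equals $2r$, an algebraic condition that, with the determinator realizing $D_1(x)$ fixed, has $O(1)$ roots on $T$. Thus each segment contributes $O(1)$ breakpoints, the total is $O(n)$, and the same argument bounds the breakpoints of $g$.

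For the time bound I would first compute all event points, hence all maximal segments of $P_{0,1}$ and $P_{1,2}$, in $O(n)$ time via the circular-hull maintenance of Lemma~\ref{lem:Wang}, exactly as in Lemma~\ref{lem:precomputeEvent1} and Corollary~\ref{cor:precomputeEvent2}. I would then compute $f(x_0)$ together with the type and determinators of $D_1(x_0)$ in $O(n)$ time: if $D_1(x_0)$ is of type \texttt{T1} or \texttt{T2}, then $f(x_0)$ is the first point at which $\alpha(x_0,f(x_0))$ becomes a radius-$r$ disk, and otherwise it is obtained by intersecting the edge containing $f(x_0)$ with the circles (type \texttt{T3}) or supporting disks (type \texttt{T4}) induced by the vertices of $\alpha(x_0,v)$. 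Starting from $(x_0,x_1)$ I would sweep $x$ along $P_{0,1}$ and $f(x)$ along $P_{1,2}$, maintaining the current segment on each side and, at each step, the constant-size candidate set of determinators. By the general circular position assumption at most three polygon vertices lie on $\partial D_1(x)$, and they must be among $\ccw(x)$, $\ccw(\ccw(x))$, $\ccw(\ccw(\ccw(x)))$ of $\alpha(x,x_1)$ and $\cw(f(x))$, $\cw(\cw(f(x)))$, $\cw(\cw(\cw(f(x))))$ of $\alpha(x_1,f(x))$; a stack argument like that of Lemma~\ref{lem:precomputeEvent1} shows these pointers change $O(n)$ times in all. Within a pair of fixed segments each of the three cases above then produces the next breakpoint and the new type and determinators in $O(1)$ time by comparing the competing expressions for $f(x)$, so all $O(n)$ breakpoints of $f$, and symmetrically of $g$, are reported in $O(n)$ total time. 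The main obstacle is precisely this last amortized bookkeeping: although several constant-size determinator candidates are tracked at once, one must argue that the monotone advance of all the circular-hull pointers keeps the global number of updates linear, so that the per-breakpoint work remains $O(1)$ amortized.
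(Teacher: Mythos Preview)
Your proposal is correct and follows essentially the same route as the paper's own proof. The paper likewise decomposes $P_{0,1}$ into $O(n)$ maximal segments via Lemma~\ref{lem:precomputeEvent1} and Corollary~\ref{cor:precomputeEvent2}, handles the same three cases (type changes to \texttt{T3}/\texttt{T4}; determinator change within \texttt{T1}; switch between \texttt{T1} and \texttt{T2}) to get $O(1)$ breakpoints per segment, initializes $f(x_0)$ in $O(n)$ time by the same case analysis, and then sweeps while tracking exactly the six candidate vertices $\ccw(x),\ccw(\ccw(x)),\ccw(\ccw(\ccw(x)))$ of $\alpha(x,x_1)$ and $\cw(f(x)),\cw(\cw(f(x))),\cw(\cw(\cw(f(x))))$ of $\alpha(x_1,f(x))$, invoking the stack argument of Lemma~\ref{lem:precomputeEvent1} to bound their total number of changes by $O(n)$.
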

\fi

  Recall that our algorithm returns \texttt{yes} if there exists a point $x \in \partial P$ such that $f(x)\geq_{x} g(x)$, otherwise it returns \texttt{no}. Hence, using Lemma~\ref{lem:SqCompute}, we have the following theorem.

  \begin{theorem}\label{thm:sequential}
    Given a convex polygon $P$ with $n$ vertices in the plane and a radius $r$, 
    we can decide whether there are two congruent disks of radius $r$ covering $P$ in $O(n)$ time.
  \end{theorem}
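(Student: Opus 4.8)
The plan is to turn the decision into a single predicate over $\partial P$ and then evaluate it from the piecewise descriptions of $f$ and $g$ produced by Lemma~\ref{lem:SqCompute}. As observed at the start of Section~\ref{sec:sequential}, $P$ is coverable by two disks of radius $r$ if and only if there is a point $x\in\partial P$ with $f(x)\geq_x g(x)$: in that case $P_{x,f(x)}$ and $P_{g(x),x}$ are two $r$-coverable subchains whose union is $\partial P$, and conversely, if disks $D_1,D_2$ cover $P$ along subchains $P_{a,b}$ and $P_{b,a}$, then the cut point $x=a$ satisfies $f(a)\geq_a b\geq_a g(a)$. Hence it suffices to test $\exists x:\,f(x)\geq_x g(x)$. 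First I would run the one-disk test of Megiddo~\cite{MEGI1984} in $O(n)$ time and answer \texttt{yes} if $P$ is $r$-coverable; otherwise $P$ is not $r$-coverable and both $f$ and $g$ are well defined and advance strictly, as used throughout Section~\ref{sec:sequential}.

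I would then apply Lemma~\ref{lem:SqCompute} to obtain, in $O(n)$ time, the $O(n)$ breakpoints of $f$ and of $g$, along with the type and determinators of the defining disk on each interval between consecutive breakpoints. On each such interval the defining disk is the supporting disk of a fixed arc of a circular hull or is fixed by a distance-$2r$ constraint (Lemmas~\ref{lem:type1},~\ref{lem:type2}, and~\ref{lem:types34}), so $f$ and $g$ are constant-degree algebraic functions of $x$ there. Merging the breakpoints of $f$ and $g$ into a common refinement of $O(n)$ intervals, the predicate $f(x)\geq_x g(x)$ reduces on each interval to comparing two such algebraic functions, which takes $O(1)$ time. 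Sweeping once over the merged intervals and answering \texttt{yes} as soon as the predicate is satisfied, and \texttt{no} otherwise, therefore costs $O(n)$.

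Finally I must certify the $O(n)$ bound and the correctness of the \texttt{no}-answer when $r<r^*$, a regime outside the hypothesis of Lemma~\ref{lem:SqCompute}. Here I would use Lemma~\ref{lem:4chains} contrapositively: starting from an arbitrary $x_0$, compute $x_1=f(x_0)$, $x_2=f(x_1)$, $x_3=f(x_2)$ in $O(n)$ time; if $P_{0,1}\cup P_{1,2}\cup P_{2,3}\neq\partial P$ then $r<r^*$ and I return \texttt{no}, while otherwise these three subchains already cover $\partial P$, so sweeping $x$ across them inspects every candidate and the breakpoint counts of Lemma~\ref{lem:SqCompute} apply. The sweep then returns \texttt{yes} exactly when some $x$ has $f(x)\geq_x g(x)$, i.e.\ exactly when $r\geq r^*$. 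The step I expect to be the main obstacle is precisely this separation of regimes together with the interval-wise predicate: one must verify that the finitely many determinator configurations make $f(x)\geq_x g(x)$ a constant-degree test, and that restricting the search to the three subchains of Lemma~\ref{lem:4chains} loses no covering, so that the linear sweep neither overruns its $O(n)$ budget nor misses a valid cut point.
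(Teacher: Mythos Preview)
Your proposal is correct and follows the paper's approach: reduce the decision to the predicate $\exists x\in\partial P:\,f(x)\geq_x g(x)$, invoke Lemma~\ref{lem:SqCompute} to obtain the $O(n)$ breakpoints of $f$ and $g$, and sweep over the merged intervals comparing the two constant-degree pieces. Your explicit treatment of the $r<r^*$ regime via the contrapositive of Lemma~\ref{lem:4chains} is a welcome addition that the paper leaves implicit, since Lemma~\ref{lem:SqCompute} is stated only for $r\geq r^*$ but its proof in fact only needs $P_{0,1}\cup P_{1,2}\cup P_{2,3}=\partial P$, exactly the condition you test directly.
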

  \section{Parallel Decision Algorithm}
  Given a real value $r$, our parallel decision algorithm computes $f(x)$ and $g(x)$ that define
  the longest subchains of $\partial P$ from $x$ covered by disks of radius $r$, and 
  determines whether there is a point $x\in\partial P$ such that $f(x) \geq_x g(x)$, in parallel.
  To do this efficiently, our algorithm first finds rough bounds of $f(x)$ and $g(x)$ 
  by modifying the parallel decision algorithm for the two-center problem for points in convex position 
  by Choi and Ahn~\cite{CHOI2020} and applying it for the vertices of $P$. 
  Then our algorithm computes $f(x)$ and $g(x)$ exactly.
  
  The parallel decision algorithm by Choi and Ahn runs in two phases: the preprocessing phase and the decision phase. 
  In the preprocessing phase, their algorithm runs sequentially without knowing $r$. 
  In the decision phase, their algorithm runs in parallel for a given value $r$. 
  It constructs a data structure that supports intersection queries of 
  a subset of disks centered at input points
  in $O(\log n)$ parallel time using $O(n)$ processors after $O(n\log n)$-time preprocessing. 
  In our problem, two congruent disks must cover 
  the edges of $P$ as well as the vertices of $P$, and thus we modify the preprocessing phase.

  In the preprocessing phase, their algorithm partitions the vertices 
  of $P$ into two subsets $S_1=\{v_1,\ldots,v_{k}\}$ and $S_2=\{v_{k+1},\ldots,v_{n}\}$, 
  each consisting of consecutive vertices along $\partial P$ such that there are 
  $v_i \in S_1$ and $v_j\in S_2$ satisfying $\{v_i,v_{i+1},\ldots,v_{j-1}\} \subset D_1$ and 
  $\{v_j,v_{j+1},\ldots,v_{i-1}\}\subset D_2$ for an optimal pair $(D_1,D_2)$ of disks for the vertices of $P$.
  The indices of vertices are cyclic such that $n+k \equiv k$ for any integer $k$.
  Then in $O(n\log n)$ time, it finds $O(n/\log^6 n )$ pairs of subsets, each consisting of 
  $O(\log^6 n)$ consecutive vertices such that there is one pair $(U,W)$ of sets 
  with $v_i\in U$ and $v_j\in W$, where $v_i$ and $v_j$ are the vertices that determine the optimal partition.

  In the preprocessing phase, our algorithm partitions 
  $\partial P$ into two subchains. Then, we partition $\partial P$ into $O(n/\log^6n)$
  subchains, each consisting of $O(\log^6 n)$ consecutive vertices, and
  compute $O(n/\log^6 n )$ pairs of the subchains such that
  at least one pair has $x$ in one subchain and $x'$ in the other subchain, and 
  $P_{x,x'}$ and $P_{x',x}$ is $r^*$-coverable.

  In the decision phase, their algorithm constructs a data structure in $O(\log n)$ parallel time with $O(n)$ processors, 
  that for a query with $r$ computes $I_r(u,w)$, 
  where $u \in U', w \in W'$ for any pair $(U',W')$ among the $O(n/\log^6 n )$ pairs.
  Then it computes $I(u,w)$ in $O(\log n)$ time and determines
  if $I(u,w)=\emptyset$ in $O(\log^3 \log n)$ time using the data structure. 
  
  In our case, our algorithm constructs a data structure that for a query with $r$ computes 
  $I_r(v_i,v_j)$ and $I_r(v_j,v_i)$ for $v_i\in P_1$, $v_j\in P_2$, 
  where $(P_1,P_2)$ is one of the $O(n/\log^6 n)$ pairs of subchains 
  computed in our preprocessing phase. Our data structure also determines if $I(v_i,v_j) = \emptyset$.

  Using the data structure, our algorithm gets rough bounds of $f(x)$ and $g(x)$.         
  Then it computes $f(x)$ and $g(x)$ exactly. 
  In doing so, it computes all breakpoints of $f(x)$ and $g(x)$, 
  and their corresponding combinatorial structures, and determines 
  whether there exists $x \in \partial P$ such that $f(x) \geq_x g(x)$.

  \subsection{Preprocessing phase}
  \iffull
  We use $f^{*}(x)$ and $g^{*}(x)$ to denote $f_{r^*}(x)$ and $g_{r^*}(x)$, respectively.
  Our algorithm partitions $\partial P$ into two subchains such that 
  $P_{x,x'}$ and $P_{x',x}$ are $r^*$-coverable, for $x$ and $x'$ contained in each subchain.
  Then it computes $O(n/\log^6 n)$ pairs of subchains of $\partial P$, each consisting of $O(\log^6 n)$ consecutive vertices.
  
  To this end, our algorithm computes a step function $F(x)$ approximating $f^*(x)$ and a step function 
  $G(x)$ approximating $g^*(x)$ on the same set of intervals of the same length.
  More precisely, at every $(\log^{6}{n})$-th vertex $v$ from $v_1$ along $\partial P$, 
  it evaluates step functions $F(v)$ and $G(v)$ on $r^{*}$  
  such that $f^*(v)\leq_v F(v)$ and $g^*(v)\geq_v G(v)$. See Fig.~\ref{fig:subcell}(a).
  In each interval, the region bounded by $F(x)$ from above and by $G(x)$ from below 
  is a rectangular cell. Thus, there is a sequence of $O({n / \log^{6}{n}})$ rectangular cells of width 
  at most $\log^{6}{n}$. See Fig.~\ref{fig:subcell}(b).
  Observe that every intersection of $f^*(x)$ and $g^*(x)$ is contained in 
  one of the rectangular cells. Thus, we focus on the sequence of rectangular cells bounded in between
  $F(x)$ and $G(x)$, which we call  the \emph{region of interest} (ROI shortly).
  In addition, we require $F(x)$ and $G(x)$ to approximate $f^*(x)$ and $g^{*}(x)$ tight enough such
  that each rectangular cell can be partitioned further by horizontal lines into 
  disjoint rectangular cells of height at most $\log^{6}{n}$, and in total there are $O(n/\log^{6}{n})$ disjoint 
  rectangular cells of width and height at most $\log^{6}{n}$ in ROI. See Fig.~\ref{fig:subcell}(c).

    \begin{figure}[ht]%
      \centering
      {\includegraphics[width=\textwidth]{./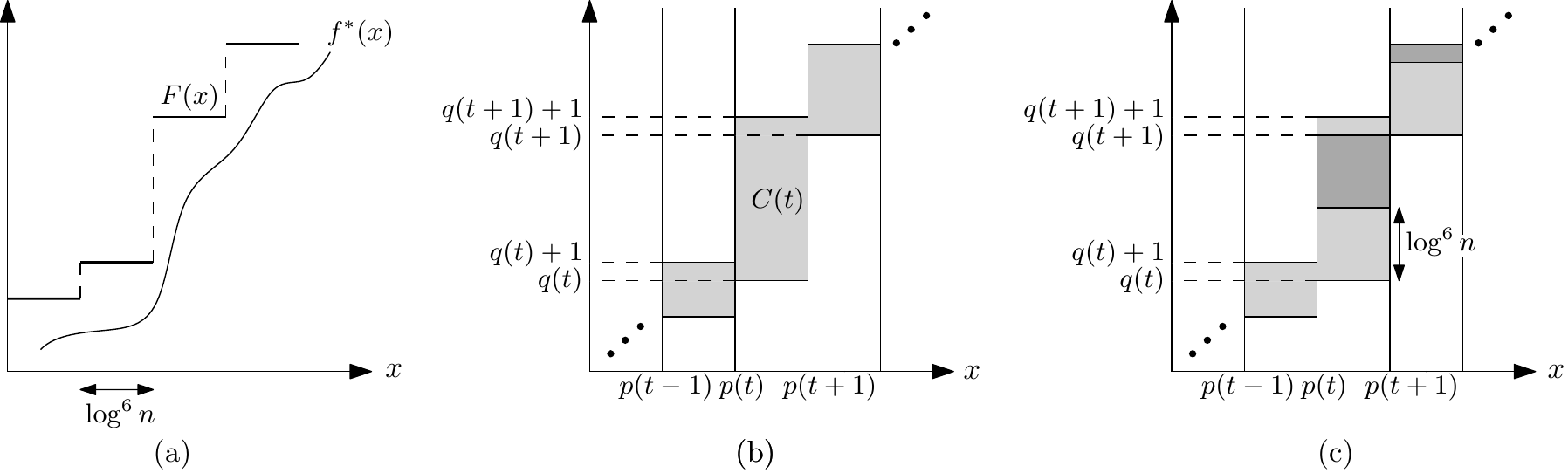}}
      \caption{The scale of the axes are labeled with the indices of vertices. (a) Step function $F(x)$ satisfying $f^*(x)\leq F(x)$,
      with intervals, each consisting of $\log^6 n$ consecutive vertices. (b) Cell $C(t):=[p(t),p(t+1)]\times[q(t),q(t+1)+1]$. 
      (c) $C(t)$ is partitioned further into three subcells, each corresponding to a group
      consisting of at most $\log^{6}{n}$ consecutive indices in $[q(t),q(t+1)+1]$.}
      \label{fig:subcell}
    \end{figure}

  In Lemmas~\ref{lem:farthest} and ~\ref{lem:2chain}, we show how to partition $\partial P$ into two subchains.
  For any two points $x,y\in \partial P$, let $\tau(x,y)$ be the smallest value such that 
  $P_{x,y}$ is $\tau(x,y)$-coverable.
  
  Note that $\tau(x,y)$ is a continuous function while $x$ or $y$ moves along the polygon boundary.
  For a point $p\in \partial P$, let $h(p)$ be the farthest point
  from $p$ in counterclockwise direction
  along $\partial P$ that satisfies $\tau(p,h(p)) \leq \tau(h(p),p)$.
  
  \begin{lemma}\label{lem:farthest}
  Given a point $p\in \partial P$, we can find $h(p)$ in $O(n\log n)$ time.
  \end{lemma}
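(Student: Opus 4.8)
The plan is to exploit the monotonic behaviour of $\tau$ to reduce the task to locating a single crossing point, and then to find that crossing by a binary search in which each probe costs one minimum enclosing disk computation.

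First I would record the monotonicity of $\tau$. Recall that $\tau(p,y)$ is the radius of the minimum enclosing disk of the vertex set of $P_{p,y}$, since the minimum enclosing disk of a subchain equals that of its vertex set by the convexity argument in the Preliminaries. As $y$ moves along $\partial P$ in counterclockwise direction, the chain $P_{p,y}$ only grows, so $\tau(p,y)$ is non-decreasing; symmetrically, the complementary chain $P_{y,p}$ only shrinks, so $\tau(y,p)$ is non-increasing. Hence $\phi(y) := \tau(p,y) - \tau(y,p)$ is non-decreasing in $y$. Just counterclockwise of $p$ the chain $P_{p,y}$ is tiny while $P_{y,p}$ is essentially all of $\partial P$, so $\phi(y)<0$; just clockwise of $p$ the situation is reversed, so $\phi(y)>0$. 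By continuity of $\tau$, the point $h(p)$, defined as the farthest counterclockwise point with $\phi(y)\le 0$, is exactly the crossing where $\tau(p,h(p)) = \tau(h(p),p)$.

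Next I would localize this crossing to a single edge by binary search over the $n$ vertices of $P$, which is valid because $\phi$ is monotone. At a probed vertex $v$ I evaluate $\tau(p,v)$ and $\tau(v,p)$, each the radius of the minimum enclosing disk of one of the two complementary vertex sets, in $O(n)$ time using Megiddo's linear-time algorithm~\cite{MEGI1984}; the sign of $\phi(v)$ tells me whether $h(p)$ lies counterclockwise or clockwise of $v$. After $O(\log n)$ probes I obtain the edge $e = v_av_{a+1}$ of $P$ containing $h(p)$, at total cost $O(n\log n)$.

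It remains to pin down the exact position of $h(p)$ inside $e$, and this is the main obstacle. On $e$ the edge $e(y)$ is fixed, and both $\tau(p,y)$ and $\tau(y,p)$ are continuous, monotone, piecewise-algebraic functions of the position of $y$: on each maximal subinterval where the at most three determinators on the boundary of the relevant enclosing disk are fixed, each of $\tau(p,y)$ and $\tau(y,p)$ is an explicit algebraic expression of bounded degree. My plan is to compute, restricted to $e$, the determinator changes together with the corresponding algebraic pieces, then use the monotonicity of $\phi$ to single out the one piece in which the crossing lies and solve the bounded-degree equation $\tau(p,y) = \tau(y,p)$ there for the exact point $h(p)$. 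The difficulty is precisely that the minimum enclosing disks can change their determinators several times as the moving endpoint $y$ sweeps $e$; I would control this by maintaining the enclosing disks incrementally as $y$ advances along $e$ (in the spirit of the incremental circular-hull maintenance of Lemma~\ref{lem:Wang}), so that all pieces meeting $e$ are produced within the $O(n\log n)$ budget, and then binary-search among these pieces using the sign of $\phi$ rather than scanning them all, after which the final algebraic solve takes $O(1)$ time.
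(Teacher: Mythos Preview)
Your reduction to a monotone crossing and the vertex-level binary search are exactly what the paper does, so through the localization of $h(p)$ to a single edge your argument matches the paper's proof.

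The divergence is in the final step, and there your plan is shakier than the paper's. You propose to track the determinators of the two minimum enclosing disks as $y$ slides along the edge, ``in the spirit of Lemma~\ref{lem:Wang}''. But Lemma~\ref{lem:Wang} maintains $\alpha_r$ for a \emph{fixed} radius $r$; here the radius $\tau(p,y)$ itself varies with $y$, so that lemma gives you neither a bound on the number of determinator changes nor an update mechanism. What is actually needed is a representation of $\tau(p,y)$ (and $\tau(y,p)$) valid for all $y$ on the edge simultaneously. The paper obtains this cleanly by lifting to $\mathbb{R}^3$: writing $I_r(p,q)$ as the slice $z=r$ of the intersection $E(p,q)$ of the cones $(x-u_x)^2+(y-u_y)^2\le z^2$ over the vertices $u$ of $P_{p,q}$, so that $\tau(p,q)$ is the height of the apex of $E(p,q)$. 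Since the vertices are in convex position, $E(p,v_i)$ has complexity $O(n)$ and is computable in $O(n)$ time by~\cite{Aggarwal1989}; then as $q$ moves along $v_iv_{i+1}$ only one additional (moving) cone is intersected, giving $\tau(p,q)$ and $\tau(q,p)$ as explicit $O(n)$-piece functions of $q$ in $O(n)$ time each, from which the crossing is read off directly. Your incremental idea can be salvaged by routing it through the farthest-point Voronoi diagram of the fixed vertices (which is equivalent to the cone intersection), but as written the appeal to Lemma~\ref{lem:Wang} does not supply the bound or the algorithm you need.
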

  \begin{proof}
  For any two vertices $u$ and $v$ of $P$, we can compute $\tau(u,v)$ in $O(n)$ time 
  using the minimum enclosing disk algorithm~\cite{MEGI1984}.
  Observe that, $\tau(u,x)$ increases and $\tau(x,u)$ decreases 
  while $x$ moves from $u$ in counterclockwise direction along $\partial P$. 
  Thus, given a point $p\in \partial P$, we can find the edge $v_iv_{i+1}$ that contains $h(p)$ 
  using binary search in $O(n\log n)$ time.
  
  To compute $\tau(p,q)$ for any two points $p,q\in\partial P$, 
  consider the intersection $E(p,q)$ of cones, each cone expressed as 
  $(x-u_x)^2 + (y-u_y)^2 \leq z^2$ for a vertex $u=(u_x,u_y)$ of $P_{p,q}$ in the $xyz$-coordinate system of $\mathbb{R}^3$.
  Since the intersection of cone $(x-u_x)^2 + (y-u_y)^2 \leq z^2$ and $z=r$ is 
  the disk of radius $r$ centered at $(u_x,u_y)$ contained in $z=r$,
  $E(p,q)$ represents $I_r(p,q)$ for all $r$.
  Thus, $\tau(p,q)$ is the $z$-coordinate of the lowest point in $E(p,q)$.
  The intersection of $n$ translates of a cone with apices
  in convex position in the plane has complexity $O(n)$, and it 
  can be computed in $O(n)$ time~\cite{Aggarwal1989}.
  While $q$ moves along $v_iv_{i+1}$, $\tau(p,q)$ can be obtained from the intersection $E(p,v_i)$ and $E(v_i,q)$. 
  Thus $\tau(q,p)$ has complexity $O(n)$ for all $q \in v_iv_{i+1}$ and it can be computed in $O(n)$ time.
  We can find $h(p)$ by comparing $\tau(p,q)$ and $\tau(q,p)$ for all $q$. 
  Since the binary search for finding the edge containing $h(p)$ dominates the running time, it takes $O(n\log n)$ time in total.
  \end{proof}
  

  \begin{lemma}\label{lem:2chain}
    $P_{v_1,h(v_1)}$ and $P_{h(v_1),v_1}$ is a partition of $\partial P$ such that there are 
    $x \in P_{v_1,h(v_1)}$ and $x' \in P_{h(v_1),v_1}$, and $P_{x,x'}$ and $P_{x',x}$ are $r^*$-coverable.
  \end{lemma}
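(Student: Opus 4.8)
The plan is to exploit the balance built into the definition of $h(v_1)$ together with a short case analysis on where an optimal partition cuts $\partial P$. Write $p=v_1$ and $q=h(v_1)$, and let $A_1=P_{p,q}$ and $A_2=P_{q,p}$ be the two subchains of the claimed partition. First I would establish a balance identity: as noted in the proof of Lemma~\ref{lem:farthest}, $\tau(p,y)$ is non-decreasing and $\tau(y,p)$ is non-increasing as $y$ moves counterclockwise along $\partial P$, so the continuous function $\phi(y):=\tau(p,y)-\tau(y,p)$ is non-decreasing. Since $\phi(y)<0$ for $y$ immediately counterclockwise of $p$ (there $\tau(p,y)\to 0$ while $\tau(y,p)$ stays bounded below by a positive constant) and $\phi(y)>0$ for $y$ immediately clockwise of $p$, and $q=h(p)$ is by definition the farthest counterclockwise point with $\phi(q)\le 0$, continuity forces $\phi(q)=0$. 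Setting $\rho:=\tau(p,q)=\tau(q,p)$, disks of radius $\rho$ cover $A_1$ and $A_2$, so $r^*\le\rho$.

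Next I would fix an optimal partition. Since $P$ is covered by two congruent disks of radius $r^*$, each covering a connected subchain of $\partial P$, there are cut points $a,b\in\partial P$ for which both $P_{a,b}$ and $P_{b,a}$ are $r^*$-coverable. The main step is a case analysis on the cyclic positions of $a,b$ relative to $p,q$. If one of $a,b$ lies in $A_1$ and the other in $A_2$, I take $x$ and $x'$ to be these two points; then $P_{x,x'}$ and $P_{x',x}$ are exactly the $r^*$-coverable subchains $P_{a,b}$ and $P_{b,a}$, and the lemma follows at once. The remaining, and key, case is when $a$ and $b$ lie in the same subchain, say both in $A_1$ with $a\leq_p b$, so that the counterclockwise order is $p,a,b,q$. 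Then the longer optimal subchain $P_{b,a}$ runs from $b$ through $q$, across $A_2$, through $p$, to $a$, hence contains all of $A_2=P_{q,p}$; being contained in an $r^*$-coverable chain, $A_2$ is $r^*$-coverable, so $\tau(q,p)\le r^*$. Combined with $\tau(q,p)=\rho\ge r^*$ this yields $\rho=r^*$, whence $\tau(p,q)=\tau(q,p)=r^*$ and both $A_1$ and $A_2$ are $r^*$-coverable; I then take $x=p\in A_1$ and $x'=q\in A_2$. The symmetric subcase $a,b\in A_2$ is handled identically.

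The delicate points are the orientation bookkeeping in the balance argument, needed to see that $\phi$ increases from negative to positive, and, in the non-straddling case, verifying which of the two optimal subchains is the one that engulfs the opposite arc; both reduce to tracking the counterclockwise order of $p,q,a,b$. I expect the conceptual crux to be the observation that whenever the optimal partition fails to straddle $\{p,q\}$, the balanced partition at $\{p,q\}$ must itself be optimal, i.e.\ $\rho=r^*$, so that $x=p$ and $x'=q$ already witness the claim; the straddling case and this degenerate case together exhaust every configuration.
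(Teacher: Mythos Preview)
Your argument is correct and takes a somewhat different route than the paper. The paper first proves that $h(\cdot)$ is monotone along $\partial P$; it then picks an optimal cut point $x\in P_{v_1,h(v_1)}$, notes that $r^*=\tau(x,h(x))=\tau(h(x),x)$ by continuity of $\tau$, and uses the monotonicity of $h$ together with a case split on whether $\tau(h(v_1),v_1)=r^*$ to place $h(x)$ in $P_{h(v_1),v_1}$. You instead establish the balance identity $\tau(v_1,h(v_1))=\tau(h(v_1),v_1)=\rho$ directly from the definition of $h$ and the monotonicity of $\tau$ in each argument, derive $r^*\le\rho$, and carry out a straddling/non-straddling case analysis on the positions of the optimal cut points relative to $\{v_1,h(v_1)\}$. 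Your approach is more self-contained: it avoids the global monotonicity of $h$ and handles cleanly the configuration in which both optimal cut points lie in the same subchain, which the paper's proof passes over by simply starting from some $x\in P_{v_1,h(v_1)}$. The paper's approach, on the other hand, isolates the monotonicity of $h$ as a reusable structural fact.
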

  \begin{proof}
  As a point $p$ moves along $\partial P$ in counterclockwise direction, $h(p)$ also moves monotonically 
  along $\partial P$ in the same direction; otherwise, $h(p)$ does not satisfy the condition 
  that it is the farthest point or $\tau(p,h(p)) \leq \tau(h(p),p)$.
  
  Consider a point $x \in P_{v_1,h(v_1)}$ such that $P_{x,x'}$ and $P_{x',x}$ are $r^*$-coverable for some point $x'$. 
  Then $r^*=\tau(x,h(x))=\tau(h(x),x)$ since $\tau$ is a continuous function.
  By the monotonicity of $h(p)$, $h(v_1)<_x h(x)$.
  If $\tau(h(v_1),v_1) \neq r^*$, $r^*= \tau(x,h(x)) < \tau(h(v_1),v_1)$. 
  Thus $h(x)<_x v_1$, implying $h(x) \in P_{h(v_1),v_1}$. Note that 
  $P_{x,h(x)}$ and $P_{h(x),x}$ are $r^*$-coverable.    
  If $\tau(h(v_1),v_1) =r^*$, $P_{v_1,h(v_1)}$ and $P_{h(v_1),v_1}$ are $r^*$-coverable. 
  \end{proof}
  
  Now we partition $\partial P$ into two subchains $P_{v_1,h(v_1)}$ and $P_{h(v_1),v_1}$. 
  We consider $x$ moving along $P_{v_1,h(v_1)}$ and $f(x)$ moving along $P_{h(v_1),v_1}$.
  Also, from now on, we use $<$ instead of $<_{v_1}$. The same goes for $\leq_{v_1}, >_{v_1}, \geq_{v_1}$.
  We need the following technical lemma to compute ROI. 

    \begin{lemma}[\protect{\cite{CHAN1999,EPPS1997}}]\label{lem:computeAB}
      After $O(n\log{n})$-time preprocessing, given any vertex pair $(v_i,v_j)$
      we can compute $\tau(v_i,v_j)$ in $O(\log^{6}{n})$ time.
    \end{lemma}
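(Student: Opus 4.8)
The plan is to reinterpret $\tau(v_i,v_j)$ as a smallest-enclosing-disk query and then answer it through a hierarchical decomposition of the cyclic vertex sequence. By definition $\tau(v_i,v_j)$ is the smallest $r$ for which $P_{v_i,v_j}$ is $r$-coverable, and from the preliminaries a subchain is $r$-coverable exactly when $I_r$ of its vertex set is nonempty. Hence $\tau(v_i,v_j)$ is the smallest $r$ with $I_r(v_i,\ldots,v_j)\neq\emptyset$, i.e.\ the radius of the minimum enclosing disk of the contiguous vertex set $\{v_i,v_{i+1},\ldots,v_j\}$ of points in convex position. So the lemma reduces to a \emph{range minimum-enclosing-disk query} on the vertices of $P$, and this is precisely the data-structuring problem solved by Eppstein~\cite{EPPS1997} and Chan~\cite{CHAN1999} in the planar two-center context.

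In the preprocessing phase I would build a balanced binary tree over the cyclic sequence $v_1,\ldots,v_n$, where each node represents a contiguous arc of vertices. Any query arc $\langle v_i,\ldots,v_j\rangle$ (possibly wrapping around) then splits into $O(\log n)$ canonical nodes. At each node covering an arc $A$ I would precompute, in a form that supports the decision test for a queried radius, a representation of the region $I_r(A)=\bigcap_{v\in A} D_r(v)$ as $r$ varies; equivalently, of the dual circular hull $\alpha_r(A)$, whose structure over the one-parameter family in $r$ is bounded by $O(|A|)$ arcs and can be stored so that, given $r$, the node's contribution is extractable quickly. Summed over the tree this preprocessing is $O(n\log n)$.

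The decision problem---given $r$ and the query arc, is the arc $r$-coverable---is then answered by intersecting the $O(\log n)$ canonical regions $I_r$ and testing whether the intersection is nonempty, using the observation that $I_r(v_i,\ldots,v_j)\neq\emptyset$ iff the arc is $r$-coverable (Observation~\ref{obs:alpha}). To obtain the \emph{exact} value $\tau(v_i,v_j)$ rather than a yes/no answer for a fixed $r$, I would run the optimization technique of Chan and Eppstein on top of this decision oracle: the optimal radius is determined by two or three vertices of the arc, so I perform an implicit (parametric) search over the structured set of candidate radii induced by vertex pairs and triples within the $O(\log n)$ canonical nodes, calling the decision procedure to localize $\tau$. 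The nesting of this search over the per-node structures is what produces the polylogarithmic query bound $O(\log^6 n)$.

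The main obstacle is that the minimum enclosing disk is not a cleanly decomposable quantity: one cannot simply combine the optimal radii of the canonical subarcs, because the determinators of the enclosing disk may straddle several nodes. The delicate part, therefore, is organizing the merge of the canonical $I_r$ regions and bounding the number and cost of oracle calls in the parametric search so that the total query stays within $O(\log^6 n)$ while the preprocessing remains $O(n\log n)$. Since this is exactly the construction carried out in~\cite{CHAN1999,EPPS1997}, I would invoke their data structure, having verified only that the query regime needed here---contiguous arcs of a convex polygon---matches the convex-position setting those works assume.
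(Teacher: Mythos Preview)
The paper does not supply a proof of this lemma at all: it is stated with the citation \cite{CHAN1999,EPPS1997} and used as a black box. Your proposal is therefore not to be compared against any argument in the paper, but rather against those external references, and as a high-level sketch of what Eppstein and Chan do---reducing $\tau(v_i,v_j)$ to a range minimum-enclosing-disk query over a contiguous arc of points in convex position, decomposing the arc into $O(\log n)$ canonical pieces via a balanced tree, and then running a parametric/implicit search over the candidate radii with a polylogarithmic decision oracle---it is accurate and appropriately identifies both the reduction and the source of the $\log^6 n$ bound.
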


    If an optimal pair of disks cover two subchains $P_{u,v}$ and $P_{v,u}$ for vertices $u$ and $v$ of $P$, 
    we can compute $r^*$ by computing the two centers for vertices in $O(n\log n)$ time 
    using the two-center algorithm for points in convex position~\cite{CHOI2020}.
    Thus, for the following lemma, we assume $r^* < \max(\tau(u,v),\tau(v,u))$ for all polygon vertices $u$ and $v$.
    For simplicity, we use $\tau(i,j)$ for $\tau(v_i,v_j)$ for polygon vertices $v_i$ and $v_j$.

    \begin{lemma}\label{lem:preprocessing}
      We can compute $O(n/\log^6 n)$ disjoint cells of height and width at most $O(\log^6 n)$ 
      such that every intersection of $f^*$ and $g^*$ lies in one of the cells in $O(n\log{n})$ time.
    \end{lemma}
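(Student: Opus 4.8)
The plan is to build the cells as a staircase of $O(n/\log^6 n)$ columns, each of width $\log^6 n$, and to control their total height by a single monotone step function sandwiched between $f^*$ and $g^*$. I would place column boundaries at the every-$(\log^6 n)$-th vertex $p(t)=v_{1+t\log^6 n}$ starting from $v_1$ on the first subchain $P_{v_1,h(v_1)}$; this yields $T=O(n/\log^6 n)$ columns $[p(t),p(t+1)]$, each spanning at most $\log^6 n$ vertices, so each cell will have width at most $\log^6 n$. By Lemma~\ref{lem:2chain} it suffices to look for crossings with $x\in P_{v_1,h(v_1)}$ and $f^*(x),g^*(x)$ on the second subchain. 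For each boundary $p(t)$ I would compute one index $q(t)$ that brackets both functions and then set $C(t)=[p(t),p(t+1)]\times[q(t),q(t+1)+1]$, exactly as in Fig.~\ref{fig:subcell}(b).

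The bracketing index comes from the balance point $h$. Since $(P_{p,h(p)},P_{h(p),p})$ is a valid two-chain split, its covering radius $\max(\tau(p,h(p)),\tau(h(p),p))$ is at least $r^*$; and because $h(p)$ is the farthest point maintaining $\tau(p,\cdot)\le\tau(\cdot,p)$, the two values are equal there, so both are $\ge r^*$. As $\tau(p,\cdot)$ increases and $\tau(\cdot,p)$ decreases while the free endpoint moves counterclockwise, this forces $f^*(p)\le h(p)\le g^*(p)$ in the $\le$ order. I would therefore take $q(t)$ to be the index of the last vertex not exceeding $h(p(t))$, so that $q(t)\le g^*(p(t))$ while $q(t)+1\ge f^*(p(t))$. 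Coverage of the crossings is then immediate from the monotonicity of $f^*$ and $g^*$ (Observation~\ref{obs:monotone}): if $f^*(x)=g^*(x)=y$ for some $x$ in column $t$, then $q(t)\le g^*(p(t))\le y$ and $y\le f^*(p(t+1))\le q(t+1)+1$, so $y\in[q(t),q(t+1)+1]$ and $(x,y)\in C(t)$.

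The decisive step is the count, and here using a single nondecreasing sequence $q(\cdot)$ rather than two independent bounds is what makes the argument go through. Because $h$ moves monotonically along $\partial P$ (as in the proof of Lemma~\ref{lem:2chain}), the indices satisfy $q(0)\le q(1)\le\cdots\le q(T)$, so the heights telescope: $\sum_t (q(t+1)+1-q(t)) = q(T)-q(0)+T = O(n)$, since $q$ ranges within an interval of at most $n$ indices and $T=O(n/\log^6 n)$. Cutting each $C(t)$ by horizontal lines at every $(\log^6 n)$-th index produces $\lceil (q(t+1)+1-q(t))/\log^6 n\rceil$ subcells of height at most $\log^6 n$, and summing gives $\sum_t\bigl((q(t+1)+1-q(t))/\log^6 n+1\bigr)=O(n/\log^6 n)$ disjoint subcells of width and height at most $\log^6 n$, each still containing every crossing it inherited from its parent cell.

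For the running time I would compute each $q(t)$ by binary search over the vertices for the largest index $j$ with $\tau(p(t),v_j)\le\tau(v_j,p(t))$; this predicate is monotone in $j$ by the same monotonicity of $\tau$, and each comparison uses two vertex-pair evaluations of $\tau$ in $O(\log^6 n)$ time by Lemma~\ref{lem:computeAB}. A binary search costs $O(\log^7 n)$, so all $O(n/\log^6 n)$ values $q(t)$ are found in $O(n\log n)$ time, within the budget already spent on the preprocessing of Lemma~\ref{lem:computeAB} and on locating $h(v_1)$ (Lemma~\ref{lem:farthest}). The main obstacle I expect is precisely the height bound: a naive choice of an upper step function $F\ge f^*$ and an independent lower step function $G\le g^*$ need not telescope and could force $\Theta(n/\log^6 n)$ tall cells with too many subcells; the remedy is to observe that $f^*$ and $g^*$ are simultaneously bracketed by the single monotone function $h$, after which the telescoping sum, the $\pm1$ rounding to vertices, and the wrap-around near $v_1$ are routine to handle.
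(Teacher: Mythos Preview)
Your proposal is correct and follows essentially the same staircase construction as the paper: column boundaries at every $(\log^6 n)$-th vertex, a single monotone bracketing index $q(t)$ sitting near the balance point $h(p(t))$ so that $v_{q(t)}\le g^*(p(t))$ and $f^*(p(t))\le v_{q(t)+1}$, cells $C(t)=[p(t),p(t+1)]\times[q(t),q(t+1)+1]$, and vertical subdivision. Your version is in fact a bit cleaner than the paper's: the paper obtains $q(t)$ via the discretized quantity $r_t=\min_i\max(\tau(p(t),i),\tau(i,p(t)))$ and therefore carries the side assumption that no vertex pair realizes $r^*$, whereas you bracket directly by the continuous $h$ using $\tau(p,h(p))=\tau(h(p),p)\ge r^*$, which needs no such case split; you also make the telescoping bound $\sum_t(q(t+1)+1-q(t))=q(T)-q(0)+T=O(n)$ explicit, which the paper leaves implicit.
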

    \begin{proof}
      For ease of description, assume that $P_{v_1,h(v_1)}$ has $k=O(n)$ vertices.
      Let $m = \lfloor {k / \log^{6}{n}}\rfloor$, and $p(t) = t\cdot \lfloor{k/m}\rfloor$ for $t = 1,2, ..., m-1$.
      For convenience, let $v_{p(0)} = v_1$ and $v_{p(m)}=h(v_1)$.
    
      For every vertex $v_{p(t)}$ from $v_1$, we find $v_{q(t)}$ such that 
      $f^* (v_{p(t)}) \leq v_{q(t)+1}$ and $g^*(v_{p(t)}) \geq v_{q(t)}$.
      Using the vertices, we define $F(x)=v_{q(t+1)+1}$ for $x$ with $v_{p(t)} \leq x < v_{p(t+1)}$, and 
      $G(x)= v_{q(t)}$ for $x$ with $v_{p(t)} < x \leq v_{p(t+1)}$.
      
      We compute $q(t)$ as follows. Let $r_t$ be the minimum of $\max\{\tau(p(t),i),\tau(i,p(t))\}$ for indices $i$ in $[k+1,n]$.
      Note that no optimal pair of disks cover two subchains $P_{v_{p(t)},v_i}$ and  $P_{v_i, v_{p(t)}}$, 
      due to the assumption that $r^*<\max(\tau(u,v),\tau(v,u))$ for polygon vertices $u$ and $v$.
      Thus, $r^*<r_t$.
      Assume that $r_t=\tau(p(t),i)$. 
      Then the largest index $q(t)$ satisfying $\tau(p(t),q(t)) < r_t$ also satisfies $r^* < \tau(p(t),q(t)+1)$ and $r^* < r_t < \tau(q(t),p(t))$.
      Thus, $f^* (v_{p(t)}) \leq v_{q(t)+1}$ and $g^*(v_{p(t)}) \geq v_{q(t)}$.
      The case that $r_t=\tau(i,p(t))$ can be shown by a similar argument.
     
      By Lemma~\ref{lem:computeAB}, we can find $q(t)$ in $O(\log^{7}{n})$ time using binary search.
      Since $m=\lfloor k/\log^{6}{n} \rfloor$, we can compute all $q(t)$'s in $O(n\log{n})$ time.
      Thus, we get $m$ cells $C(t):=[p(t),p(t+1)]\times[q(t),q(t+1)+1]$ for $t \in [0,m-1]$.
      See Fig.~\ref{fig:subcell}(b).
  
      We partition each cell $C(t)$ further into subcells as follows.
      If $q(t+1) - q(t) + 1 \leq \log^{6}{n}$, we simply form a group consisting of at most 
      $\log^{6}{n}$ consecutive indices of polygon vertices, from $q(t)$ to $q(t+1)$. 
      If $q(t+1) - q(t) + 1 > \log^{6}{n}$, we partition the indices of polygon vertices, from $q(t)$ to $q(t+1)$, into
      disjoint groups consecutively, each consisting of $\log^{6}{n}$ consecutive indices, 
      except for the last group with at most $\log^{6}{n}$ indices. 
      See Fig.~\ref{fig:subcell}(c) for an illustration for grouping consecutive indices.
      In this way, we have at most $O(n/\log^6 n)$ groups in total, each consisting of at most $\log^{6}{n}$ consecutive indices in $[q(t),q(t+1)+1]$ for $t\in[0,m-1]$.
    \end{proof}
    
    Observe that the $O(n/\log^6 n)$ cells of Lemma~\ref{lem:preprocessing} 
    form ROI. We say that a vertex pair $(v_i,v_j)$ is in ROI if and only if 
    $i\in [p(t),p(t+1)]$ and $j\in [q(t),q(t+1)+1]$ for some $t\in [0,m-1]$ where $m=\lfloor n/\log^6 n\rfloor$. 
    Also, we say an edge pair $(v_iv_{i+1},v_jv_{j+1})$ is in ROI if and only if 
    $i, i+1 \in [p(t),p(t+1)]$ and $j, j+1\in [q(t),q(t+1)+1]$ for some $t\in [0,m-1]$.

    \else
    We use $f^{*}(x)$ and $g^{*}(x)$ to denote $f_{r^*}(x)$ and $g_{r^*}(x)$, respectively.
  Our algorithm partitions $\partial P$ into two subchains such that 
  $P_{x,x'}$ and $P_{x',x}$ are $r^*$-coverable, for $x$ and $x'$ contained in each subchain.
  Then it computes $O(n/\log^6 n)$ pairs of subchains of $\partial P$, each consisting of $O(\log^6 n)$ consecutive vertices.
  
  More precisely, for any two points $x,y\in \partial P$, let $\tau(x,y)$ be the smallest value such that $P_{x,y}$ is $\tau(x,y)$-coverable. 
  For a point $p\in \partial P$, let $h(p)$ be the farthest point from $p$ in counterclockwise direction along $\partial P$ that satisfies $\tau(p,h(p)) \leq \tau(h(p),p)$.
  Then for any vertex $v$ of $P$,  
  $P_{v,h(v)}$ and $P_{h(v),v}$ form a partition of $\partial P$ 
  such that there are $x \in P_{v,h(v)}$ and $x'\in P_{h(v),v}$, and 
  $P_{x,x'}$ and $P_{x',x}$ are $r^*$-coverable. 
  The details on the partition of $\partial P$ into two subchains can be found in Appendix.
  
  We consider $x$ moving along $P_{v_1,h(v_1)}$ and $f(x)$ moving along $P_{h(v_1),v_1}$.
  Also, from now on, we use $<$ instead of $<_{v_1}$. The same goes for $\leq_{v_1}, >_{v_1}, \geq_{v_1}$.
  To compute rough bounds of $f(x)$ and $g(x)$, our algorithm computes a step function $F(x)$ approximating $f^*(x)$ and a step function 
  $G(x)$ approximating $g^*(x)$ on the same set of intervals of the same length.
  More precisely, at every $(\log^{6}{n})$-th vertex $v$ from $v_1$ along $\partial P$, 
  it evaluates step functions $F(v)$ and $G(v)$ on $r^{*}$  
  such that $f^*(v)\leq F(v)$ and $g^*(v)\geq G(v)$. See Fig.~\ref{fig:subcell}(a).
  In each interval, the region bounded by $F(x)$ from above and by $G(x)$ from below 
  is a rectangular cell. Thus, there is a sequence of $O({n / \log^{6}{n}})$ rectangular cells of width 
  at most $\log^{6}{n}$. See Fig.~\ref{fig:subcell}(b).
  Observe that every intersection of $f^*(x)$ and $g^*(x)$ is contained in 
  one of the rectangular cells. Thus, we focus on the sequence of rectangular cells bounded in between
  $F(x)$ and $G(x)$, which we call  the \emph{region of interest} (ROI shortly).
  In addition, we require $F(x)$ and $G(x)$ to approximate $f^*(x)$ and $g^{*}(x)$ tight enough such
  that each rectangular cell can be partitioned further by horizontal lines into 
  disjoint rectangular cells of height at most $\log^{6}{n}$, and in total there are $O(n/\log^{6}{n})$ disjoint 
  rectangular cells of width and height at most $\log^{6}{n}$ in ROI. See Fig.~\ref{fig:subcell}(c).

    \begin{figure}[ht]%
      \centering
      {\includegraphics[width=\textwidth]{./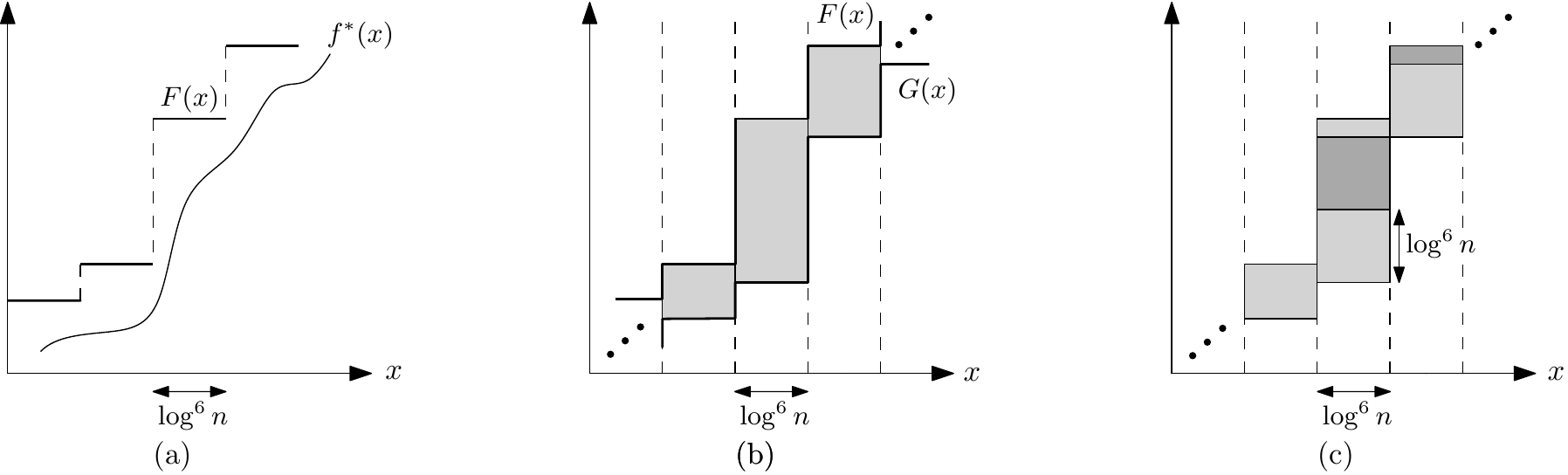}}
      \caption{(a) Step function $F(x)$ satisfying $f^*(x)\leq F(x)$,
      with intervals, each consisting of $\log^6 n$ consecutive vertices. (b) Sequence of rectangular cells bounded in between $F(x)$ and $G(x)$. 
      (c) Disjoint rectangular cells of width and height at most $\log^{6}{n}$.}
      \label{fig:subcell}
    \end{figure}
    We say that a vertex pair $(v_i,v_j)$ is in ROI if and only if $G(v_i) \leq v_j \leq F(v_i)$. 
    Also, we say an edge pair $(v_iv_{i+1},v_jv_{j+1})$ is in ROI if and only if 
    vertex pairs $(v_i,v_j)$, $(v_i,v_{j+1})$, $(v_{i+1},v_j)$ and $(v_{i+1},v_{j+1})$ are all in ROI.
    The details on how to compute ROI can be found in Appendix.
\fi

  \subsection{Decision phase}
  Recall that our parallel decision algorithm finds, for a given $r$, 
  the intersections of the graphs of $f(x)$ and $g(x)$ in ROI.
  We use the data structure of the parallel decision algorithm of the two-center problem for points in convex position~\cite{CHOI2020}.
  To evaluate $f(x)$ for a given $r$, 
  we first find $O(n)$ edge pairs $(e(x),e(f(x)))$ in ROI.
  Then we assign a processor to each edge pair to compute the event points. 
  Then we assign a processor to each event point to compute 
  the breakpoints and the corresponding combinatorial structures of $f(x)$.
  We also do this for $g(x)$.
  Lastly, for each combinatorial structure, we determine whether there exists 
  $x\in\partial P$ such that $f(x) \geq g(x)$.
  This process can be done in $O(\log n)$ parallel steps using $O(n)$ processors,
  after $O(n \log n)$-time preprocessing.

  
  \subsubsection{Data structures.} We adopt the data structure for the two-center problem for points 
  in convex position by Choi and Ahn~\cite{CHOI2020}. 
  To construct the data structure, they store the frequently used intersections of disks for all $r>0$. 
  Then, they find a range of radii $(r_1,r_2]$ containing the optimal radius $r'$ for the two center problem for points in convex position. 
  To do this they use binary search and the sequential decision algorithm for points in convex position. 
  In our case, we compute a range of radii $(r_1,r_2]$ containing the optimal radius $r^*$ 
  using binary search and the sequential decision algorithm in Section~\ref{sec:sequential} running in $O(n)$ time.
  For $r \in (r_1,r_2]$, we construct a data structure that supports the following.

    \begin{lemma}[\cite{CHOI2020}]\label{lem:datastructure}
      After $O(n\log n)$-time preprocessing, we can construct a data structure in $O(\log n)$ parallel steps
      using $O(n)$ processors that supports the following queries with $r \in (r_1,r_2]$: (1) For any vertex $v_i$ in $P_{v_1,h(v_1)}$,
      compute $I(v_i,h(v_1))$ represented in a binary search tree with height $O(\log n)$ in $O(\log n)$ time. 
      (2) For any pair $(v_i,v_j)$ of vertices in ROI, determine if $I(v_i,v_j)=\emptyset$ 
      in $O(\log^3 \log n)$ time.
    \end{lemma}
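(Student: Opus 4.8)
The plan is to port the data structure of Choi and Ahn~\cite{CHOI2020} for points in convex position to the vertices of $P$, using the two subchains $P_{v_1,h(v_1)}$ and $P_{h(v_1),v_1}$ as the two linearly ordered inputs their construction expects. The essential point that makes this port legitimate is that covering a subchain is equivalent to covering its vertex set: by the identity $I_r(C)=I_r(S)$ and $\alpha_r(C)=\alpha_r(S)$ for the vertex set $S$ of a subchain $C$ (the discussion preceding Observation~\ref{obs:alpha}), the intersection of the radius-$r$ disks centered along any subchain is determined by its vertices alone, so the polygon edges introduce no new region and no new combinatorial event. First I would build, during the $O(n\log n)$ preprocessing, a balanced binary tree over each subchain whose canonical nodes store the intersection $I_r$ of the disks centered at the vertices in the node; using the duality between $I_r$ and the circular hull $\alpha_r$, each such region is kept as a sorted arc sequence (its upper and lower chains) in a height-$O(\log n)$ balanced search tree, with each arc recorded by the center that defines it rather than by its $r$-dependent geometry.

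The only place $r$ enters the preprocessing is the choice of $(r_1,r_2]$. Following~\cite{CHOI2020}, I would isolate an interval $(r_1,r_2]$ that contains $r^*$ and within which every stored combinatorial structure is fixed, by binary searching over the candidate radii with our $O(n)$-time sequential decision algorithm (Theorem~\ref{thm:sequential}) in place of theirs; replacing their decision routine by ours is exactly what keeps the total within $O(n\log n)$. Once $(r_1,r_2]$ is fixed each stored arc becomes an explicit algebraic function of $r$ over a combinatorially constant tree, so for a query value $r\in(r_1,r_2]$ the structure is finalized by instantiating these functions at $r$, which parallelizes into $O(\log n)$ steps with $O(n)$ processors.

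For query (1) the right endpoint $h(v_1)$ is fixed, so I would treat $I(v_i,h(v_1))$ as a growing-suffix family: maintain the intersection while prepending vertices from $h(v_1)$ toward $v_1$ and make the representation persistent, so that the version reached after the vertices of $P_{v_i,h(v_1)}$ have been inserted is exactly $I(v_i,h(v_1))$, retrievable as a height-$O(\log n)$ tree in $O(\log n)$ time. Each prepending step intersects the current region with a single new disk and therefore clips a contiguous run of the boundary, which is what keeps the persistent updates and the resulting height logarithmic. For query (2) I would use that a pair $(v_i,v_j)$ in ROI has $v_i\in P_{v_1,h(v_1)}$ and $v_j\in P_{h(v_1),v_1}$, so $I(v_i,v_j)=I(v_i,h(v_1))\cap I(h(v_1),v_j)$ is the intersection of two convex arc-regions obtained from query (1) and its mirror; testing emptiness reduces to a coordinated binary search for a crossing of their two monotone boundaries. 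Since ROI confines each relevant pair to a cell of width and height $O(\log^6 n)$, this emptiness test---whose cost is cubic in the logarithm of the operand size by~\cite{CHAN1999,CHOI2020}---is run on secondary structures of size $O(\log^6 n)$ built per cell and hence costs $O(\log^3(\log^6 n))=O(\log^3\log n)$.

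The hard part will be the two stability claims on which everything rests: that $(r_1,r_2]$ can be chosen free of every interior combinatorial event of the stored intersections while staying within the $O(n\log n)$ budget, and that the per-query merge and emptiness routines really inherit the $O(\log n)$ and $O(\log^3\log n)$ bounds. Both are supplied by~\cite{CHOI2020} and rest on the monotonicity of $\partial I_r$ for points in convex position; the remaining work is only to verify that reading the two polygon subchains as their two point sets preserves these properties, which is immediate from $I_r(C)=I_r(S)$ since the edges create no event that the vertices do not already witness.
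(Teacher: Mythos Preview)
Your proposal is correct and follows essentially the same approach as the paper: adopt the data structure of Choi and Ahn~\cite{CHOI2020} verbatim on the vertex set (justified by $I_r(C)=I_r(S)$), replacing only their sequential decision routine by Theorem~\ref{thm:sequential} when isolating $(r_1,r_2]$. The paper in fact gives no proof for this lemma beyond the paragraph preceding it and the citation, so your write-up is more detailed than what appears there, but the route is the same.
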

  
   
  \subsubsection{Computing edge pairs.}
    Using the data structure in Lemma~\ref{lem:datastructure}, we get the following lemma. 
    \begin{lemma}\label{lem:ComputeEdgePairs} 
      Given $r \in (r_1,r_2]$, we can compute all edge pairs $(e(x),e(f(x)))$ in ROI 
      in $O(\log{n})$ parallel time using $O(n)$ processors, after $O(n\log n)$-time preprocessing.
    \end{lemma}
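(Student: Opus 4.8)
The plan is to use the emptiness queries of the data structure of Lemma~\ref{lem:datastructure} to locate, for every polygon vertex in $P_{v_1,h(v_1)}$, the edge containing its image under $f$, and then to convert these per-vertex locations into the full set of edge pairs by a parallel prefix-sum argument.

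First I would compute, for each vertex $v_i$ of $P_{v_1,h(v_1)}$, the edge $e(f(v_i))$. Recall that $P_{v_i,v_j}$ is $r$-coverable if and only if $I(v_i,v_j)\neq\emptyset$, and that as $v_j$ moves along $\partial P$ away from $v_i$ the chain $P_{v_i,v_j}$ only grows, so $I(v_i,v_j)$ is nonempty on an initial range of $v_j$ and empty thereafter. Hence $e(f(v_i))=v_jv_{j+1}$, where $v_j$ is the last vertex with $I(v_i,v_j)\neq\emptyset$. Within ROI the candidate vertices $v_j$ for a fixed $v_i$ are the at most $\log^6 n$ vertices with $G(v_i)\le v_j\le F(v_i)$, so a single processor can binary search among them, issuing an emptiness query of type (2) of Lemma~\ref{lem:datastructure} at each step. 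This costs $O(\log\log n)$ queries, i.e.\ $O(\log^4\log n)$ time, and assigning one processor to each of the $O(n)$ vertices lets us compute all $e(f(v_i))$ simultaneously in $o(\log n)$ parallel time. Queries at the extreme candidates detect the case where the graph of $f$ leaves the cell through its top or bottom boundary, which we treat by clamping to the cell.

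Next I would assemble the edge pairs. By Observation~\ref{obs:monotone}, as $x$ traverses an edge $e(x)=v_iv_{i+1}$, the point $f(x)$ moves monotonically from $f(v_i)$ to $f(v_{i+1})$; therefore the edges $e(f(x))$ that occur form exactly the contiguous range from $e(f(v_i))$ to $e(f(v_{i+1}))$, and the edge pairs $(e(x),e(f(x)))$ lying in ROI are obtained by clamping this range to the cell of $e(x)$. Because $x$ and $f(x)$ each advance monotonically around $\partial P$ exactly once, the total number of such edge pairs over all edges $e(x)$ is $O(n)$: the monotone staircase traced by $(e(x),e(f(x)))$ advances one of its two coordinates at each step, so it meets $O(n)$ distinct edge pairs in total, and restricting to ROI only keeps a subset.

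Finally, to output these pairs within the processor budget, I would have each edge $e(x)=v_iv_{i+1}$ compute the count $c_i$ of edge pairs it contributes, namely the length of the clamped range obtained in $O(1)$ time from $e(f(v_i))$, $e(f(v_{i+1}))$ and the cell bounds; then compute the prefix sums of the $c_i$ in $O(\log n)$ parallel time with $O(n)$ processors; and finally allocate to each edge a contiguous block of $c_i$ processors that writes out its edge pairs. Combined with the $O(\log n)$-step, $O(n)$-processor construction of the data structure of Lemma~\ref{lem:datastructure} after its $O(n\log n)$-time preprocessing, this yields the claimed bounds. The main obstacle is this last step: the per-edge counts $c_i$ are not known in advance and vary across edges, so the argument hinges first on establishing the $O(n)$ bound on the total number of edge pairs through the monotonicity of $f$, and then on balancing the $O(n)$ outputs across the $O(n)$ processors via prefix sums rather than by a naive per-edge assignment.
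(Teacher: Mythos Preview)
Your argument is correct and follows the same core approach as the paper: for each vertex $v_i$, binary search among the at most $\log^6 n$ candidate vertices in ROI using the $O(\log^3\log n)$-time emptiness queries of Lemma~\ref{lem:datastructure} to locate $e(f(v_i))$. The paper's proof stops there and declares the edge pairs found; you go further and spell out the enumeration step---bounding the total number of edge pairs by $O(n)$ via the monotonicity of $f$ and allocating processors by prefix sums---which the paper leaves implicit but which is indeed needed if one wants an explicit list to hand to the next stage.
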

    \iffull
    \begin{proof}
    By Lemma~\ref{lem:datastructure}, we can construct a data structure supporting the following query: 
    given any pair of vertices $(v_i,v_j)$ in ROI, determine if $I_r(v_i,v_j)=\emptyset$ 
    in $O(\log^3 \log n)$ time. For each vertex $v_i$, we find a vertex $v_j$ such that 
    $(v_i,v_j)$ is in ROI, $I_r(v_i,v_j)\neq\emptyset$, and $I_r(v_i,v_{j+1})=\emptyset$ 
    in $O(\log^4 \log n)$ time using the data structure and performing binary search on $\log^6 n$ vertices.
    It takes $O(\log n)$ parallel time using $O(n)$ processors to construct the data structure, 
    after $O(n\log n)$-time preprocessing. Thus, we find all edge pairs in ROI in $O(\log n)$ parallel time using $O(n)$ processors, 
    after $O(n\log n)$-time preprocessing.
    \end{proof} 
  \fi
  
  \subsubsection{Computing the combinatorial structure.}
    After computing the edge pairs using Lemma~\ref{lem:ComputeEdgePairs}, we compute 
    the breakpoints and the corresponding combinatorial structures of $f(x)$.
    To do this, we compute event points and find breakpoints from the event points for each edge pair. 
    For $D_1(x)$ of type \texttt{T3} or \texttt{T4},
    its determinators never change for an edge pair $(e(x),e(f(x)))$ 
    \iffull by Lemma~\ref{lem:types34}. \else by Lemma~\ref{lem:D1.types}. \fi 
    Thus, for each edge pair we find candidates of the determinators of $D_1(x)$ of type \texttt{T3} or \texttt{T4} in $O(\log n)$ time.
    
    For $D_1(x)$ of type \texttt{T1} or \texttt{T2}, we find the event points of $\ccw(x)$ of $\alpha(x,h(v_1))$, 
    and the event points of $\cw(f(x))$ of $\alpha(h(v_1),f(x))$.
    Consider an edge pair $(u'u,vv')$ in ROI such that $f(x)\in vv'$  for some $x \in u'u$.
    The edge pair $(u'u,vv')$ may have $O(n)$ event points at which $\ccw(x)$ of $\alpha(x,h(v_1))$ or 
    $\cw(f(x))$ of $\alpha(h(v_1),f(x))$ changes, while the total number of event points is $O(n)$.
    We find the event points of $\ccw(x)$ of $\alpha(x,h(v_1))$ represented in a binary search tree using $I(u,h(v_1))$ in $O(\log n)$ time.
    Thus, we can find the event points of $\ccw(x)$ of $\alpha(x,h(v_1))$ for all edge pairs in $O(\log n)$ parallel steps
  using $O(n)$ processors.
    For two consecutive event points of $\ccw(x)$ of $\alpha(x,h(v_1))$, 
    we compute the corresponding event points of $\cw(f(x))$ of $\alpha(h(v_1),f(x))$ in $O(\log n)$ time. 
  For a segment $T$ such that $e(x)$, $\ccw(x)$ of 
  $\alpha(x,h(v_1))$, $e(f(x))$, and $\cw(f(x))$ of $\alpha(h(v_1),f(x))$ 
  remain the same for any $x\in T$, we compute $f(x)$.
    
  
    \begin{lemma} \label{lem:ParCompute}
      Given $r \in (r_1,r_2]$, we can compute $f(x)$ for all $x\in \partial P$  such that 
      $(e(x),e(f(x)))$ is an edge pair in ROI, represented as a binary search tree 
      of height $O(\log n)$ consisting of $O(n)$ nodes, in $O(\log{n})$ parallel steps 
      using $O(n)$ processors, after $O(n\log n)$-time preprocessing.
    \end{lemma}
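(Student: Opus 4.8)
The plan is to parallelize the sequential computation of Lemma~\ref{lem:SqCompute}. Recall that in the sequential setting we partition the domain into maximal segments on which $e(x)$, $\ccw(x)$ of $\alpha(x,h(v_1))$, $e(f(x))$, and $\cw(f(x))$ of $\alpha(h(v_1),f(x))$ all stay fixed, and that on each such segment $D_1(x)$ (hence $f(x)$) is a single algebraic function determined by its type. First I would invoke Lemma~\ref{lem:ComputeEdgePairs} to obtain all $O(n)$ edge pairs $(e(x),e(f(x)))$ in ROI in $O(\log n)$ parallel steps with $O(n)$ processors. The work then splits into the two regimes isolated in Lemmas~\ref{lem:type1}--\ref{lem:types34}: disks of type \texttt{T3}/\texttt{T4}, whose determinators are invariant along a fixed edge pair, and disks of type \texttt{T1}/\texttt{T2}, which require the event points of $\ccw(x)$ and $\cw(f(x))$.

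For type \texttt{T3}/\texttt{T4}, Lemma~\ref{lem:types34} guarantees that the determinators of $D_1(x)$ do not change along an edge pair, so I would assign a group of processors to each of the $O(n)$ edge pairs and, querying the height-$O(\log n)$ search tree for $I(u,h(v_1))$ from Lemma~\ref{lem:datastructure}(1), extract in $O(\log n)$ time the $O(1)$ candidate determinators that place the relevant points on $\partial D_1(x)$. This yields the candidate \texttt{T3}/\texttt{T4} functions $f(x)$ for every edge pair within the stated budget.

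The heart of the argument is the type \texttt{T1}/\texttt{T2} case. For each edge pair $(u'u,vv')$ I would query $I(u,h(v_1))$ and, using the duality between $\alpha$ and $I$, read off from its search tree the event points at which $\ccw(x)$ of $\alpha(x,h(v_1))$ changes as $x$ traverses $u'u$, storing them as a balanced tree in $O(\log n)$ time. The difficulty is that a single edge pair may contribute $\Theta(n)$ event points even though the total over all edge pairs is $O(n)$ (the same bound as in Lemma~\ref{lem:SqCompute}), so a flat allocation of $O(1)$ processors per edge pair cannot finish in $O(\log n)$. I would therefore first count, in $O(\log n)$, the number of event points of each edge pair from the subtree sizes, then run a parallel prefix-sum over the $O(n)$ edge pairs to reallocate the $O(n)$ processors so that one processor is responsible for each individual event point. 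Each such processor matches its $\ccw(x)$-event point with the corresponding $\cw(f(x))$-event point on the mate edge $vv'$ through the symmetric structure for $\alpha(h(v_1),f(x))$ in $O(\log n)$ time, exactly as in the sequential matching described before the lemma. This load balancing by prefix sums is what I expect to be the main obstacle, both to state precisely and to keep the total work at $O(n)$.

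Finally, the combined event points subdivide every edge pair into segments on which $e(x)$, $\ccw(x)$, $e(f(x))$, and $\cw(f(x))$ are constant, so by Lemmas~\ref{lem:type1}, \ref{lem:type2}, and~\ref{lem:types34} the disk $D_1(x)$ on each segment is one of $\delta(x)$ of $\alpha(x,h(v_1))$, $\delta(\cw(f(x)))$ of $\alpha(h(v_1),f(x))$, or the disk fixed by the distance-$2r$ condition, and $f(x)$ is a single algebraic function computable in $O(1)$ time together with its $O(1)$ interior breakpoints (found by pairwise comparison of the competing type functions, as in the sequential case). Since Lemma~\ref{lem:SqCompute} bounds the number of breakpoints by $O(n)$, assigning one processor per segment produces $O(n)$ pieces. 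To assemble them into one balanced binary search tree of height $O(\log n)$, I would sort the breakpoints along $\partial P$ by Cole's parallel merge sort in $O(\log n)$ time with $O(n)$ processors and build the balanced tree from the sorted order. Summing the phases, the whole computation runs in $O(\log n)$ parallel steps using $O(n)$ processors after the $O(n\log n)$-time preprocessing of Lemma~\ref{lem:datastructure}.
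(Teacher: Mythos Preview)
Your proposal is correct and follows essentially the same approach as the paper: both split into the \texttt{T3}/\texttt{T4} case (determinators fixed per edge pair, read off from the $I$-structure of Lemma~\ref{lem:datastructure}) and the \texttt{T1}/\texttt{T2} case (event points of $\ccw(x)$ and $\cw(f(x))$ extracted via duality from the same structure, then matched between consecutive events), followed by the $O(1)$-per-segment breakpoint computation. Your explicit handling of load balancing via prefix sums and assembly via parallel merge sort fills in standard parallel details the paper leaves implicit, while the paper is slightly more concrete on the geometry (restricting to the maximal subsegment $ss'\subseteq u'u$ with $f(x)\in vv'$, and identifying the \texttt{T3}/\texttt{T4} determinators as the unique intersection point of $I(u,v)$ with the radius-$r$ disk about $f(x)$).
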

    \iffull
    \begin{proof}
    Let $(u'u,vv')$ be an edge pair in ROI such that $f(x)\in vv'$  for some $x \in u'u$. 
    Imagine $x$ moves along the edge $u'u$ from $u'$ to $u$.
    We compute $f(x)$ for each type of $D_1(x)$.
    Consider the case that $D_1(x)$ is of type \texttt{T3} or \texttt{T4}. 
    Let $x'\in u'u$ be the first breakpoint of $f(x)$ from $u'$ such that $D_1(x)$ becomes a disk 
    of type \texttt{T3} or \texttt{T4}.
    Then the determinators of $D_1(x)$ remain the same for any $x\in x'u$ by Lemma~\ref{lem:types34}.
    Thus, $P_{u,f(x)}$ has the same determinators and $I(u,f(x))$ must be a point.
    This implies that the intersection of $I(u,v)$ and the disk of radius $r$ centered at $f(x)$ is just a point. 
    We can find $I(u,v)$ and $f(x)$ in $O(\log n)$ time by Lemma~\ref{lem:datastructure} and binary search. 
    If the intersection point is a vertex $w$ of $I(u,v)$, then $P_{u,f(x)}$ is covered by 
    $D_1(x)$ of type \texttt{T4}.
    Moreover, the two vertices of $P_{u,f(x)}$ defining $w$ are the two determinators of $D_1(x)$ other than $f(x)$.
    See Fig.~\ref{fig:parallelBreakpoints}(a).
    If the intersection point is in an arc of $I(u,v)$, then 
    $P_{u,f(x)}$ is covered by $D_1(x)$ of type \texttt{T3},
    and the other determinator is the vertex of $P_{u,f(x)}$ defining the arc.
  
  For disks $D_1(x)$ of type \texttt{T1} or \texttt{T2}, we find the event points of 
  $e(x)$, $\ccw(x)$ of $\alpha(x,h(p))$, $e(f(x))$, and $\cw(f(x))$ of $\alpha(h(v_1),x)$.
  To find the event points of $e(f(x))$, we first find the maximal segment $ss'$ contained in $u'u$ 
  such that $e(f(x))$ is $vv'$ for any $x \in ss'$.
  Let $D(x)$ be the disk of radius $r$ centered at $x$.
  Note that $s$ in $e(x)$ is the closest point from $u'$ such that $D(s) \cap I(u,v) \neq \emptyset$, 
  and $s'$ in $e(x)$ is the farthest point from $u'$ such that $D(s') \cap I(u,v') \neq \emptyset$. 
  We can find $s$ and $s'$ in $O(\log n)$ time by binary search. See Fig.~\ref{fig:parallelBreakpoints}(b).

    For $x \in ss'$, we find the event points of $\ccw(x)$ of $\alpha(x,h(v_1))$ using $I(u,h(v_1))$.  
    The vertices of $P$ that are the center of the arcs of $\alpha(u,x_1)$ lying in between $\partial D(s) \cap \partial I(u,h(v_1))$
    and $\partial D(s') \cap \partial I(u,h(v_1))$ are $\ccw(x)$ of $\alpha(x,h(v_1))$ while $x$ moves along $ss'$.
    This is due to the duality between $I(x,h(v_1))$ and $\alpha(x,h(v_1))$.
    See Fig.~\ref{fig:parallelBreakpoints}(c) for an illustration.
    
    Similarly, we find the maximal segment $tt'$ such that $e(x)$ is $u'u$ for all $f(x) \in tt'$ and 
    compute $\cw(f(x))$ of $\alpha(h(p),f(x))$ for $f(x) \in tt'$ in $O(\log n)$ time. 
    Thus, we can find the event points contained in all edge pairs in $O(\log n)$ time using $O(n)$ processors. 
    
    Let $z$ and $z'$ be two consecutive event points of $\ccw(x)$ of $\alpha(x,h(v_1))$ such that $zz'$ is contained in $e(x)$.
    We can find $\cw(f(z))$ of $\alpha(h(v_1),f(z))$ and $\cw(f(z'))$ of $\alpha(h(v_1),f(z'))$ in $O(\log n)$ time 
    by binary search over the event points of $\cw(f(x))$ of $\alpha(h(v_1),f(x))$.
    The event points of $\cw(f(x))$ of $\alpha(h(v_1),f(x))$ lying in between $f(z)$ and $f(z')$ are represented 
    as a binary search tree of height $O(\log n)$.
   We repeat this process for every segment in $e(x)$ connecting two consecutive event points of $\ccw(x)$ of $\alpha(x,h(v_1))$. 
   Then we have all event points of $\cw(f(x))$ of $\alpha(h(v_1),f(x))$ lying in between 
   two consecutive event points of $\ccw(x)$ of $\alpha(x,h(v_1))$.
   By Lemma~\ref{lem:precomputeEvent1} and Corollary~\ref{cor:precomputeEvent2}, there are $O(n)$ event points in total.
   
   Let $T$ be a maximal segment such that for any $x\in T$, $e(x)$, $\ccw(x)$ of $\alpha(x,h(v_1))$, $e(f(x))$, and $\cw(f(x))$ of $\alpha(h(v_1),f(x))$ remain the same.
   By comparing two $f(x)$ functions, one for the current type of $D_1(x)$ and one for other types of $D_1(x)$, we can compute in $O(1)$ time the next breakpoint and the new determinators and type of $D_1(x)$ for $x$ right after the breakpoint.
    Therefore, we can compute $f(x)$ for all $x\in \partial P$ such that $(e(x),e(f(x)))$ is an edge pair in ROI, 
    represented as a binary search tree of height $O(\log n)$ consisting of $O(n)$ nodes, in $O(\log{n})$ parallel steps 
    using $O(n)$ processors, after $O(n\log n)$-time preprocessing.
    \end{proof}
    \fi
    \iffull
    \begin{figure}[ht]%
      \centering
      {\includegraphics[width=\textwidth]{./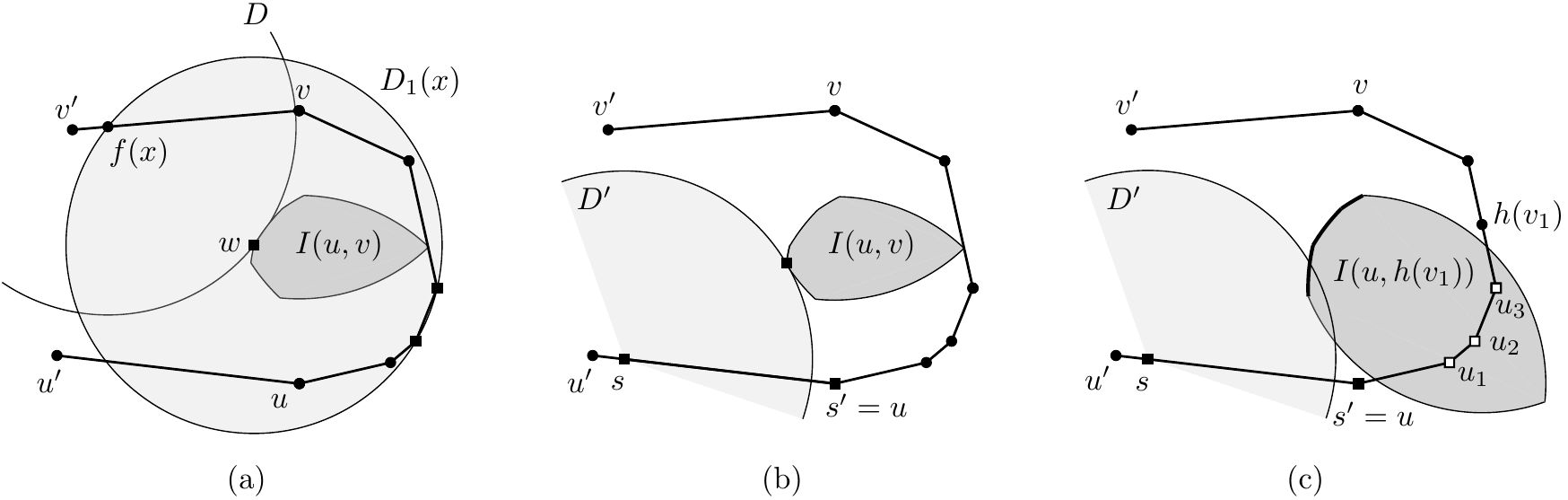}}%
      \caption{Computing the candidates of $D_1(x)$. (a) $D_1(x)$ under the assumption that $D_1(x)$ is of \texttt{T4}. The intersection of $I(u,v)$ and 
      the disk $D$ of radius $r$ centered at $f(x)$ is just a vertex $w$ of $I(u,v)$. 
      (b) Finding a maximal segment $[s,s']$ for $D_1(x)$ of type \texttt{T1}. $D'$ is
      the disk of radius $r$ centered at $s$. (c) Finding $\ccw(x)$'s of $\alpha(x,h(v_1))$ 
      while $x$ moves along $[s,s']$. The bold arcs of $I(u,h(v_1))$ are the arcs defined by $\ccw(x)$'s. $u_1=\ccw(s'), u_3=\ccw(s)$.}%
      \label{fig:parallelBreakpoints}%
    \end{figure}        
    \fi
      
    Now, we have $f(x)$ and $g(x)$ within ROI, each represented as a binary search tree of height $O(\log n)$ and size $O(n)$. 
    For two consecutive breakpoints $t$ and $t'$ of $f(x)$, we find the corresponding combinatorial structures of $g(t)$ and $g(t')$. 
    Then we determine whether there exists $x\in tt'$ such that $f(x) \geq g(x)$ 
    for all combinatorial structures of $g(x)$.
    Since $f(x)$ and $g(x)$ have $O(n)$ breakpoints by Lemma~\ref{lem:SqCompute}, 
    we can determine whether two disks of radius $r$ cover $P$ 
    in $O(\log n)$ parallel steps using $O(n)$ processors, 
    after $O(n\log n)$-time preprocessing.
    Therefore, using Lemma~\ref{lem:ParCompute}, we get the following theorem.
        
    \begin{theorem}
      Given a real value $r$, we can determine whether $r\geq r^*$ in $O(\log{n})$ parallel steps 
      using $O(n)$ processors, after $O(n\log n)$-time preprocessing.
    \end{theorem}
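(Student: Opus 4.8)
The plan is to assemble the final theorem directly from the components established in the decision phase, so the proof is essentially a bookkeeping argument that chains together the preprocessing lemmas, the data-structure lemma, and the two computation lemmas while tracking the parallel time and processor budget at each stage. First I would recall that the preprocessing phase (Lemmas~\ref{lem:farthest}--\ref{lem:preprocessing}) partitions $\partial P$ into two subchains and produces the $O(n/\log^6 n)$ cells forming ROI in $O(n\log n)$ sequential time, and that for $r\in(r_1,r_2]$ Lemma~\ref{lem:datastructure} builds, in $O(\log n)$ parallel steps with $O(n)$ processors, the intersection-query structure. The range $(r_1,r_2]$ containing $r^*$ is obtained by binary search using the $O(n)$-time sequential decision algorithm of Theorem~\ref{thm:sequential}, which is absorbed into the $O(n\log n)$ preprocessing.

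Next I would invoke Lemma~\ref{lem:ComputeEdgePairs} to extract all $O(n)$ edge pairs $(e(x),e(f(x)))$ in ROI in $O(\log n)$ parallel time with $O(n)$ processors, and then Lemma~\ref{lem:ParCompute} to compute $f(x)$ (and, by the symmetric argument, $g(x)$) for all $x$ whose edge pair lies in ROI, each represented as a balanced binary search tree of height $O(\log n)$ and size $O(n)$, again within the $O(\log n)$/$O(n)$-processor budget after the same preprocessing. The key observation justifying that ROI suffices is that every intersection of the graphs of $f^*$ and $g^*$ lies inside one of the ROI cells (Lemma~\ref{lem:preprocessing}), so restricting attention to edge pairs in ROI loses no candidate crossing of $f$ and $g$.

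The final step is the decision test itself: for each pair of consecutive breakpoints $t,t'$ of $f(x)$ I would locate the corresponding combinatorial pieces of $g(x)$ (via a search in the $g$-tree) and check, for each such piece, whether some $x\in tt'$ satisfies $f(x)\ge g(x)$; since $f$ and $g$ agree with fixed algebraic functions between consecutive breakpoints, each such check is $O(1)$ after an $O(\log n)$ search, and by Lemma~\ref{lem:SqCompute} there are only $O(n)$ breakpoints, so assigning one processor per breakpoint settles all comparisons in $O(\log n)$ parallel steps with $O(n)$ processors. If any comparison succeeds the algorithm reports $r\ge r^*$, otherwise $r<r^*$.

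The main obstacle I expect is not any single inequality but the careful matching of combinatorial pieces of $f(x)$ and $g(x)$ in parallel: because the breakpoints of $f$ and $g$ interleave arbitrarily, one must align the $O(n)$ pieces of the two functions and evaluate the pointwise comparison $f(x)\ge g(x)$ over each common subinterval without exceeding the $O(\log n)$ depth or the $O(n)$ processor bound. This is handled by exploiting the balanced-tree representations from Lemma~\ref{lem:ParCompute}, which permit each processor to perform the required cross-search in $O(\log n)$ time, so the alignment is done in parallel rather than sequentially; verifying that this alignment plus the per-piece $O(1)$ algebraic test indeed respects both resource bounds is the step that requires the most care.
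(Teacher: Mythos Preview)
Your proposal is correct and follows essentially the same approach as the paper: the paper's argument for this theorem is precisely the chaining of the preprocessing lemmas, the data-structure lemma (Lemma~\ref{lem:datastructure}), Lemma~\ref{lem:ComputeEdgePairs}, and Lemma~\ref{lem:ParCompute} to obtain $f$ and $g$ in ROI as balanced trees, followed by the per-breakpoint comparison using the $O(n)$ breakpoint bound from Lemma~\ref{lem:SqCompute}. Your write-up is in fact more explicit than the paper's own summary paragraph, and your identification of the alignment of the $f$- and $g$-pieces as the step needing the most care is accurate and handled the same way (via the $O(\log n)$-height tree representations).
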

\iffull
By applying Cole's parametric search~\cite{COLE1987} with our sequential decision algorithm and parallel decision algorithm, we get the following theorem.
\else

      We use Cole's parametric search technique~\cite{COLE1987} to compute the optimal radius $r^*$. 
    For a sequential decision algorithm of running time $T_S$ and 
    a parallel decision algorithm of parallel running time $T_P$ using $N$ processors, 
    we can apply Cole's parametric search to compute $r^*$ in $O(NT_P + T_S(T_P + \log N))$ time.
    To apply Cole's parametric search, the parallel decision algorithm must satisfy 
    a bounded fan-in/bounded fan-out requirement.
    Our parallel decision algorithm satisfies such requirement. 
    In our case, $T_S = O(n)$, $T_P = O(\log n)$, and $N = O(n)$.
    Therefore, by applying Cole's technique, $r^*$ can be computed in $O(n \log n)$ time.
    
\fi
    \begin{theorem}  
  Given a convex polygon with $n$ vertices in the plane, we can find in $O(n\log n)$ time 
  two congruent disks of minimum radius whose union covers the polygon. 
  \end{theorem}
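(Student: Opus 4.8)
The plan is to combine the two decision algorithms via Cole's parametric search~\cite{COLE1987}. The optimal radius $r^*$ is exactly the infimum of the values $r$ for which the decision problem answers \texttt{yes}, that is, the smallest $r$ admitting a point $x\in\partial P$ with $f_r(x)\geq_x g_r(x)$. The first step I would take is to observe that, when the parallel decision algorithm is run on a symbolic value of $r$, every comparison it makes is governed by the sign of a bounded-degree algebraic function of $r$: each event point, each breakpoint of $f(x)$ and $g(x)$, and each test $f_r(x)\geq_x g_r(x)$ is determined by intersecting disks and circular-hull arcs of radius $r$. Hence each critical comparison changes its outcome precisely at a root of such a function, and $r^*$ is itself one of these roots.

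Second, I would drive Cole's search by running the parallel decision algorithm generically on the unknown parameter $r=r^*$, using the sequential algorithm of Theorem~\ref{thm:sequential} as the oracle that resolves comparisons. At each of the $T_P=O(\log n)$ parallel steps there are $N=O(n)$ pending comparisons, each amounting to deciding on which side of a root the value $r^*$ lies; one call to the sequential algorithm at a candidate value settles all comparisons whose roots equal that value, and localizes $r^*$ relative to the rest. Cole's weighted-median selection resolves a constant fraction of the pending comparisons per oracle call, so the total number of sequential invocations is $O(T_P+\log N)=O(\log n)$. Substituting $T_S=O(n)$, $T_P=O(\log n)$, and $N=O(n)$ into the bound $O(NT_P+T_S(T_P+\log N))$ yields $O(n\log n+n\log n)=O(n\log n)$; the $O(n\log n)$-time preprocessing required by the parallel algorithm does not affect this bound.

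The main obstacle is verifying the structural hypotheses of Cole's technique: that the parallel decision algorithm satisfies the bounded fan-in / bounded fan-out condition on its computation graph, and that every comparison it performs is a bounded-degree algebraic predicate in $r$ against which the sequential oracle can localize $r^*$. Both hold in our setting, since the parallel algorithm is assembled from balanced binary-search-tree operations of height $O(\log n)$ (Lemmas~\ref{lem:datastructure}--\ref{lem:ParCompute}) whose nodes have constant in- and out-degree, and every predicate it evaluates compares algebraic functions of $r$ of constant degree. Finally, once Cole's search returns $r^*$, I would run the sequential decision algorithm one more time at $r=r^*$ to recover the witnessing point $x'$ together with the two covering disks $D_1^{r^*}(x')$ and $D_2^{r^*}(x')$ in an additional $O(n)$ time, which keeps the total running time within $O(n\log n)$ and completes the proof.
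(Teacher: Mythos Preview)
Your proposal is correct and follows essentially the same approach as the paper: apply Cole's parametric search~\cite{COLE1987} using the sequential decision algorithm ($T_S=O(n)$) as the oracle and the parallel decision algorithm ($T_P=O(\log n)$, $N=O(n)$) as the generic algorithm, verify the bounded fan-in/fan-out requirement via the binary-search structure of the parallel algorithm, and plug into $O(NT_P+T_S(T_P+\log N))$ to obtain $O(n\log n)$. Your added remarks that the comparisons are constant-degree algebraic predicates in $r$ and that a final $O(n)$ run recovers the two disks are correct elaborations that the paper leaves implicit.
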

  \iffull
  \begin{proof}
    We use Cole's parametric search technique~\cite{COLE1987} to compute the optimal radius $r^*$. 
    For a sequential decision algorithm of running time $T_S$ and 
    a parallel decision algorithm of parallel running time $T_P$ using $N$ processors, 
    we can apply Cole's parametric search and compute $r^*$ in $O(NT_P + T_S(T_P + \log N))$ time.
    To apply Cole's parametric search, the parallel decision algorithm must satisfy 
    a bounded fan-in/bounded fan-out requirement.
    Our parallel decision algorithm satisfies this requirement because it consists of 
    independent binary searches. In our case, $T_S = O(n)$, $T_P = O(\log n)$, and $N = O(n)$.
    Therefore, by applying Cole's technique, $r^*$ can be computed in $O(n \log n)$ time.
  \end{proof}
  \fi
  
  \iffull
  \section{Conclusions}
  We present an $O(n\log n)$-time algorithm for the two-center problem for a convex polygon, 
  where $n$ is the number of vertices of the polygon. However, it is unknown whether the time complexity is optimal. Thus, a question is whether there is any lower bound other than 
  the trivial $\Omega(n)$. We would like to mention that our sequential decision algorithm 
  can be used for covering $\partial P$ 
  (and any curve) using the minimum number of unit disks under the condition that
  a unit disk can cover at most one connected component of $\partial P$.
  \fi
\iffull
\else \newpage
\fi

\bibliographystyle{splncs04}
\bibliography{papers}
%
  \iffull
  \else
  \setboolean{@twoside}{false}
  \newpage
  \includepdf[pages={-}]{v18-full.pdf}
  \fi
\end{document}